\documentclass[11pt]{article}
\usepackage[margin=1in]{geometry}
\usepackage{amsmath,amssymb,amsthm,mathtools,bm}
\usepackage{booktabs,graphicx,microtype,natbib,hyperref,enumitem}
\usepackage{placeins}
\hypersetup{colorlinks=true,linkcolor=blue,citecolor=blue,urlcolor=blue}

\newtheorem{theorem}{Theorem}[section]
\newtheorem{proposition}[theorem]{Proposition}

\newtheorem{lemma}[theorem]{Lemma}
\theoremstyle{definition}
\theoremstyle{remark}\newtheorem{remark}[theorem]{Remark}

\newcommand{\E}{\mathbb{E}}
\newcommand{\Cov}{\operatorname{Cov}}
\newcommand{\tr}{\operatorname{tr}}
\newcommand{\rank}{\operatorname{rank}}
\newcommand{\R}{\mathbb{R}}
\newcommand{\cH}{\mathcal{H}}
\newcommand{\cX}{\mathcal{X}}
\newcommand{\cE}{\mathcal{E}}
\newcommand{\bK}{\mathbf{K}}
\newcommand{\bI}{\mathbf{I}}
\newcommand{\bU}{\mathbf{U}}
\newcommand{\bB}{\mathbf{B}}
\newcommand{\bbeta}{\boldsymbol{\beta}}
\newcommand{\beps}{\boldsymbol{\varepsilon}}

\title{\textbf{Covariance Kernels on Unordered Pair Spaces:}\\
\textbf{Theory and Applications to Network-Valued Data}}
\author{
Montserrat Fuentes\\
\small Department of Mathematics, St. Edward's University, Austin, Texas 78704, USA\\
\small \texttt{mfuentes@stedwards.edu}}
\date{}

\begin{document}
\maketitle

\begin{abstract}
Many scientific problems are fundamentally relational: the quantity of interest is a connection between two objects, while the information that describes similarity is available for the objects themselves. This mismatch is common in brain networks, molecular interactions, and other high-dimensional systems. We develop a covariance framework for unordered relationships that transfers object-level geometry to the relations they form while preserving endpoint identity and invariance to ordering. Building on symmetric pairwise-kernel representations, we develop new theory for the loop-free domains used in undirected networks. We establish spectral interlacing and trace-loss results after self-pairs are removed, connect the relational spectrum to regularization and risk, derive an exact inferential error for spectral truncation, and quantify how perturbations of the underlying geometry propagate to the pair covariance and estimator. Simulations show when structured borrowing improves estimation and when geometric misspecification can erode that benefit. We then apply the framework to autism neuroimaging using resting-state functional-connectivity data from the Autism Brain Imaging Data Exchange (ABIDE). The analysis considers a 116-region brain parcellation, yielding 6,670 unique functional connections, and shows that this large connectome can have a far smaller effective covariance dimension. The application also demonstrates that high explained covariance alone is not sufficient for choosing a low-rank representation when the goal is to preserve inferential accuracy. These results illustrate how biologically meaningful information defined at the brain-region level can organize dependence among connections while retaining the connection itself as the inferential unit. The framework provides a principled foundation for covariance and regularization when the statistical units are unordered relationships.

\end{abstract}

\noindent\textbf{Keywords:}
covariance kernel; unordered relationships; network-valued data; relational inference;
spectral regularization; effective dimension; loop-free networks.

\section{Introduction}
\label{sec:intro}

Many scientific questions are no longer naturally framed around isolated units. They concern relationships. Brain function depends on communication among regions. Molecular systems are shaped by interactions among genes and proteins. Ecological and social systems are also organized through connections. In each case, the statistical object of interest may be a relation between two entities rather than a measurement attached to one entity.

That change of statistical unit creates a basic modeling problem. Classical covariance methods are usually indexed by one location, time point, or object \citep{Aronszajn1950,BerlinetThomasAgnan2004,RasmussenWilliams2006,Hofmann2008,Kanagawa2018}. In an undirected network, however, an effect belongs to a pair. The endpoints matter, but their order does not. A covariance model for these data must therefore be defined on the relationships themselves.

Vectorizing a network does not resolve this issue. It produces a long vector of edges, but it does not say which edges should share information. Treating all connections independently discards scientific structure. Imposing smoothness according to an arbitrary matrix ordering is equally unsatisfactory. The opportunity is to use what is already known about the constituent objects to determine when the relationships they form should be considered similar.

This opportunity is especially important in high-dimensional network problems. In brain connectivity, thousands of connection-level effects may be estimated for each study. Signals associated with disease or behavior can be weak at any single connection and distributed across systems. Independent edgewise analyses can therefore be unstable, while global network summaries can hide the local relationships that carry the signal. A relational covariance offers a middle ground: it can borrow information across scientifically related connections while keeping the individual connection as the inferential unit.

The same need appears in other fields. Protein interactions may be related because the proteins have similar biological roles. Ecological associations may inherit structure from species traits or phylogeny. Relationships among people or organizations may be similar because the participating actors occupy comparable positions. The recurring feature is that scientific information is available for the objects, while inference is required for their relationships.

Several existing literatures address neighboring problems. Network models describe dependence created by shared actors, latent positions, or other features of observed networks \citep{Hoff2005,FosdickHoff2015,Hoff2021,Graham2024,Suveges2023}. Methods for samples of networks study population variation or dynamic network behavior \citep{DuranteDunson2018,Severn2022,Zhang2024}. Stochastic-process models on graphs instead place a process on vertices or along an existing graph \citep{Anderes2020,Tang2024,Borovitskiy2021}. Graph kernels compare whole graph-structured objects \citep{Vishwanathan2010}. These contributions are important, but they do not address the particular domain mismatch considered here: how to turn object-level scientific similarity into covariance for an entire surface of unordered relational effects.

The closest mathematical foundation comes from symmetric pairwise kernel learning. Kronecker-type constructions have been used to predict interactions between pairs of objects, including protein--protein and drug--target interactions \citep{BenHurNoble2005,Cichonska2018,Pahikkala2015,Stock2018,Viljanen2022}. Symmetrization, tensor representations, rank reduction, and the product spectrum on the full pair space are therefore not claimed as new here. The new statistical problem begins when that algebra is used as covariance for a stochastic process on the loop-free relation domain that arises in an undirected network.

That shift raises questions that are essential for inference. Removing self-pairs changes the spectrum from the exact full-product form. Low-rank approximation changes the regularization applied to relational effects, so covariance compression must be judged by its inferential consequences. The underlying geometry may also be estimated from data, making stability to geometric perturbation part of the modeling problem rather than a separate numerical concern.

This paper develops theory for those questions. We establish how the spectrum changes when self-pairs are removed and quantify the associated rank and trace loss. We connect the induced relational spectrum to shrinkage and risk, making explicit when structured borrowing can help and when misspecification can hurt. We then derive an exact operator error for spectral truncation and show how error in the object-level geometry propagates to the pair covariance and regularized estimator. The resulting framework turns a familiar pairwise algebra into a covariance theory for loop-free relational inference.

A central implication is that a large network need not have a comparably large statistical dimension. If object-level structure concentrates relational covariance in a small number of directions, thousands of edges can be organized through a much smaller effective geometry. This creates real opportunities for stable inference and computation. The theory also shows why compression cannot be assessed by explained covariance alone: a rank that reconstructs the covariance well may still alter the estimator more than the scientific analysis can tolerate.

We illustrate these ideas with resting-state functional connectivity from the Autism Brain Imaging Data Exchange (ABIDE) \citep{DiMartino2014}. Using the 116-region Automated Anatomical Labeling atlas \citep{TzourioMazoyer2002}, each participant contributes 6,670 undirected connections. Autism provides a scientifically compelling setting because alterations in brain organization may be distributed rather than concentrated at one connection. The application asks whether anatomical and population-level functional organization can reveal a much smaller relational covariance structure while preserving inference at the connection level.

The paper proceeds from domain to inference. Section~\ref{sec:motivation} defines the unordered relation space and explains why object-level geometry is useful. Section~\ref{sec:kernel} constructs the covariance and connects it to symmetric pairwise kernels. Section~\ref{sec:spectral} develops rank and spectral results, with emphasis on the loop-free domain. Section~\ref{sec:stat} links that geometry to regularization and risk. Section~\ref{sec:approx} studies truncation and perturbation. Section~\ref{sec:higher} gives the higher-order extension. The simulations in Section~\ref{sec:simulation} isolate the main theoretical consequences, and Section~\ref{sec:application} shows how they appear in the ABIDE connectome. Sections~\ref{sec:discussion} and~\ref{sec:conclusions} discuss the implications and conclude the paper.

\section{Unordered Pair Spaces and Why They Matter}
\label{sec:motivation}

The theory begins with the statistical domain itself. In an undirected network, a relationship between two objects is unchanged when the endpoints are reversed. The covariance must therefore be defined on unordered pairs, not on an arbitrary ordering of them. This section makes that domain explicit and explains why information about the individual objects is a natural source of structure for the relationships they form.
\subsection{From objects to relationships}

Let $\cX$ be a set of objects. On $\cX^2$, define
\[
(x_1,x_2)\sim(x_2,x_1).
\]
Treating the two orderings as the same relation gives the unordered-pair space. Excluding self-pairs gives
\begin{equation}
\label{eq:pairspace}
\cE_2(\cX)
=
\left\{
\{x_1,x_2\}:x_1,x_2\in\cX,\ x_1\neq x_2
\right\}.
\end{equation}
For a finite set $\cX=\{1,\ldots,R\}$,
\[
|\cE_2(\cX)|=\binom R2.
\]

A relational stochastic process is therefore a collection
\[
\{Z(e):e\in\cE_2(\cX)\}.
\]
Here $Z(e)$ can represent an observed edge quantity, a regression effect, or a latent relational process. Its covariance,
\[
\Cov\{Z(e),Z(f)\},
\]
is therefore a relationship-to-relationship quantity. This is the modeling step that vectorization alone cannot supply.

\subsection{Why a node-level geometry is often available}

In many applications, the objects already have scientifically meaningful representations. Brain regions have anatomical locations and functional roles. Genes and proteins have molecular features. Species can be compared through traits or phylogeny. People and organizations may be described by observed characteristics or learned embeddings.

Suppose this information is summarized by a positive-semidefinite kernel
\[
\kappa(x,y),
\qquad x,y\in\cX.
\]
Positive-semidefinite kernels provide a natural language for similarity and covariance \citep{Aronszajn1950,RasmussenWilliams2006,Hofmann2008}. The difficulty is that $\kappa$ compares single objects while $Z$ is indexed by pairs. A useful relational covariance must bridge those domains without introducing an orientation and without reducing a pair to a summary that loses endpoint identity. The construction below does exactly that.

\subsection{Why the distinction matters}

Once the process is indexed by relationships, covariance determines how information can be shared across them. In a connectome, for example, two connections may be statistically related because their endpoint regions are anatomically or functionally similar. The same principle can organize interaction effects in molecular studies or association surfaces in ecology. The important point is not the application-specific label placed on an edge. It is that similarity among the constituent objects can provide scientifically interpretable structure for covariance among the relationships.

This perspective differs from using object features only as predictors. Here the object-level information determines dependence across an entire relational effect surface. That distinction is what makes the construction useful for regularization and uncertainty quantification later in the paper.

\section{A Covariance Construction for Unordered Relations}
\label{sec:kernel}

We now construct the covariance that carries object-level geometry to unordered relationships. The symmetrization itself has close connections to pairwise kernel learning. Here it provides the mathematical foundation on which the new loop-free and inferential results are built.

\subsection{Symmetric endpoint matching}

A covariance between two relationships should use information from both endpoints but should not depend on how either pair is written. A single endpoint matching would introduce an artificial orientation. Averaging the two possible matchings removes that orientation while retaining the endpoint information.

Let $\kappa:\cX\times\cX\rightarrow\R$ be a positive-semidefinite kernel describing similarity between individual objects. For two unordered pairs
\[
e=\{x_1,x_2\},
\qquad
f=\{y_1,y_2\},
\]
define
\begin{equation}
\label{eq:lift}
k_2(e,f)
=
\frac12
\left[
\kappa(x_1,y_1)\kappa(x_2,y_2)
+
\kappa(x_1,y_2)\kappa(x_2,y_1)
\right].
\end{equation}
The first product compares the two pairs under one endpoint alignment and the second uses the alternative alignment. Their average is unchanged if either pair is reversed. Equation~\eqref{eq:lift} is therefore well defined on unordered pairs rather than on an arbitrarily ordered representation of them.

This form also preserves more scientific information than reducing a pair to a midpoint, average feature, or scalar edge length. Two connections may have similar midpoints while joining very different objects, and two spatially distant connections may play analogous scientific roles because their endpoints are similar. The endpoint-matching construction retains this distinction.

To use \eqref{eq:lift} as covariance, we still need to know that every finite Gram matrix is positive semidefinite. The next proposition gives that validity result and identifies the feature space of the relational construction.

\subsection{Symmetric tensor representation}

Let $\cH_\kappa$ be the RKHS of $\kappa$ \citep{Aronszajn1950,Hofmann2008}, with feature map
\[
\phi:\cX\rightarrow\cH_\kappa
\]
such that
\[
\kappa(x,y)=\langle\phi(x),\phi(y)\rangle_{\cH_\kappa}.
\]

\begin{proposition}[Symmetric-tensor representation and covariance validity]
\label{prop:tensor}
For $e=\{x_1,x_2\}$ define
\[
\Psi(e)
=
\frac12
\left[
\phi(x_1)\otimes\phi(x_2)
+
\phi(x_2)\otimes\phi(x_1)
\right].
\]
Then
\[
k_2(e,f)
=
\langle\Psi(e),\Psi(f)\rangle
\]
in the second symmetric tensor power $\operatorname{Sym}^2(\cH_\kappa)$. Consequently, $k_2$ is positive semidefinite, and its RKHS is isometrically representable as the closure of the span of the pair features $\{\Psi(e):e\in\cE_2(\cX)\}$.
\end{proposition}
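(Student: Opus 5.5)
The plan is to prove the identity by direct computation in the Hilbert tensor product $\cH_\kappa\otimes\cH_\kappa$. The only structural fact needed is that on elementary tensors the inner product factorizes:
\[
\langle a\otimes b,\,c\otimes d\rangle=\langle a,c\rangle_{\cH_\kappa}\langle b,d\rangle_{\cH_\kappa}.
\]
Writing $\Psi(f)=\tfrac12[\phi(y_1)\otimes\phi(y_2)+\phi(y_2)\otimes\phi(y_1)]$ and expanding $\langle\Psi(e),\Psi(f)\rangle$ by bilinearity gives four terms, each weighted by $\tfrac14$. By the reproducing identity $\kappa(x,y)=\langle\phi(x),\phi(y)\rangle$, two of these terms equal $\kappa(x_1,y_1)\kappa(x_2,y_2)$ and the other two equal $\kappa(x_1,y_2)\kappa(x_2,y_1)$. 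Collecting them yields exactly $\tfrac12[\kappa(x_1,y_1)\kappa(x_2,y_2)+\kappa(x_1,y_2)\kappa(x_2,y_1)]=k_2(e,f)$ as in \eqref{eq:lift}.

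Before this computation I would record two preliminary points.
\begin{enumerate}[label=(\roman*)]
\item $\Psi(e)$ is well defined on unordered pairs, since the expression is invariant under swapping $x_1,x_2$.
\item $\Psi(e)$ lies in $\operatorname{Sym}^2(\cH_\kappa)$, the closed subspace fixed by the unitary flip $S(a\otimes b)=b\otimes a$. This holds because $\Psi(e)=P(\phi(x_1)\otimes\phi(x_2))$ with $P=\tfrac12(I+S)$ the orthogonal projection onto that subspace.
\end{enumerate}
The inner product on $\operatorname{Sym}^2$ is the one inherited from $\cH_\kappa\otimes\cH_\kappa$. This convention matters: it is the unnormalized restriction, with no factor $2$ or $\sqrt2$ in front of $\Psi$, and the constant in the identity depends on it.

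Positive semidefiniteness then follows at once from the feature representation. For any finite collection $e_1,\dots,e_n\in\cE_2(\cX)$ and any $c\in\R^n$,
\[
\sum_{i,j}c_ic_j\,k_2(e_i,e_j)=\Big\|\sum_i c_i\Psi(e_i)\Big\|^2\ge 0,
\]
so every Gram matrix of $k_2$ is positive semidefinite.

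For the RKHS statement I would use the standard feature-map characterization of \citet{Aronszajn1950}. Let $\mathcal{G}=\overline{\operatorname{span}}\{\Psi(e):e\in\cE_2(\cX)\}$ and define $T:\mathcal{G}\to\R^{\cE_2(\cX)}$ by $(Tg)(e)=\langle g,\Psi(e)\rangle$. The argument has three steps.
\begin{enumerate}[label=(\alph*)]
\item $T$ is injective on $\mathcal{G}$: if $\langle g,\Psi(e)\rangle=0$ for all $e$, then $g\perp\mathcal{G}$, and since $g\in\mathcal{G}$ this forces $g=0$.
\item Transport the norm of $\mathcal{G}$ to the image $T\mathcal{G}$, which makes $T$ an isometry by construction.
\item Check the reproducing property: $T\Psi(f)=k_2(\cdot,f)$, and $\langle Tg,\,T\Psi(f)\rangle=\langle g,\Psi(f)\rangle=(Tg)(f)$.
\end{enumerate}
By uniqueness of the RKHS (Moore--Aronszajn), $T\mathcal{G}=\cH_{k_2}$ isometrically.

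The main obstacle is bookkeeping rather than depth. One must take the closure, not the algebraic span, so that $\mathcal{G}$ is complete. One must restrict to $\mathcal{G}$ rather than all of $\operatorname{Sym}^2(\cH_\kappa)$, so that $T$ is injective. One must also fix the inner-product normalization on $\operatorname{Sym}^2$ to match the $\tfrac12$ in \eqref{eq:lift}. I would state that normalization explicitly at the outset.
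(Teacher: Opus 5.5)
Your proposal is correct and follows the paper's proof. Like the paper, you expand $\langle\Psi(e),\Psi(f)\rangle$ into four elementary-tensor terms with weight $\tfrac14$, collect them into the two endpoint matchings, and deduce positive semidefiniteness and the RKHS description from the feature-map representation; your explicit Moore--Aronszajn construction via $T$ spells out the step the paper only states as following ``from the feature-map representation.''
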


\begin{proof}
The tensor-product inner product satisfies
\[
\langle a_1\otimes a_2,b_1\otimes b_2\rangle
=
\langle a_1,b_1\rangle
\langle a_2,b_2\rangle.
\]
Expanding $\langle\Psi(e),\Psi(f)\rangle$ produces four terms. Symmetry of $\kappa$ makes the two endpoint matchings appear twice, each with coefficient $1/4$. Hence
\[
\langle\Psi(e),\Psi(f)\rangle=k_2(e,f)
\]
exactly. Positive semidefiniteness and the RKHS characterization follow from the feature-map representation.
\end{proof}

Proposition~\ref{prop:tensor} is the foundation for what follows. It shows that \eqref{eq:lift} is an ordinary inner-product covariance after the node features are lifted into a symmetric tensor space. This representation exposes the dimension and full-product spectrum of the pair covariance. It also provides the operator form needed later to study perturbations of the node geometry.

\subsection{Relation to pairwise kernel learning}

Equation~\eqref{eq:lift} belongs to the family of symmetric Kronecker kernels used in pairwise prediction \citep{Pahikkala2015,Stock2018,Viljanen2022}. Similar constructions have supported biological interaction learning by turning kernels on proteins or drugs into kernels on pairs \citep{BenHurNoble2005,Cichonska2018}. That literature gives the algebraic starting point.

The statistical need here begins beyond that starting point. A covariance for a relational stochastic process must be understood on the loop-free domain actually used in an undirected network. It must also remain meaningful when it is truncated for computation or built from an estimated geometry. The results below are developed for those inferential questions. The tensor representation, rank bound, and exact full-product spectrum provide the reference structure; the new theory quantifies what happens when the analysis moves from that idealized structure to the covariance used in practice.

\section{Intrinsic Dimension and Spectral Structure}
\label{sec:spectral}

The number of unordered pairs grows quadratically with the number of objects. The next question is therefore how much of that nominal dimension is actually occupied by the induced covariance. We begin with rank and the exact spectrum on the full symmetric product, then turn to the loop-free domain used in practice.

\subsection{Finite-dimensional rank}

The tensor representation immediately separates edge count from covariance dimension. Even when the number of relationships is very large, a low-dimensional node geometry can force the relational covariance into a much smaller space. The rank result makes that reduction precise.

For a finite node set, let $\bK_V$ be the $R\times R$ Gram matrix of the base kernel and let $\bK_E$ be the $q\times q$ Gram matrix of the induced pair kernel, with $q=\binom R2$.

\begin{proposition}[Rank inherited from the base geometry]
\label{prop:rank}
If
\[
\rank(\bK_V)=d,
\]
then
\[
\rank(\bK_E)
\le
\min\left\{
\binom R2,
\binom{d+1}{2}
\right\}.
\]
\end{proposition}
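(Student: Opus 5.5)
The plan is to read the rank off the feature-space representation of Proposition~\ref{prop:tensor}. The bound $\binom R2$ is trivial, since $\bK_E$ is a $q\times q$ matrix with $q=\binom R2$. For the other bound, I will factor $\bK_E$ as a Gram matrix of vectors lying in a space of dimension at most $\binom{d+1}{2}$. Any Gram matrix has rank at most the dimension of the span of its generating vectors, so this suffices.

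First, factor the base Gram matrix. Since $\bK_V$ is positive semidefinite with rank $d$, write $\bK_V=\bU\bU^\top$ with $\bU\in\R^{R\times d}$, for instance from the spectral decomposition. Let $u_i\in\R^d$ denote the $i$th row of $\bU$, so that $\kappa(i,j)=\langle u_i,u_j\rangle$. This gives a finite-dimensional feature map $i\mapsto u_i$ into $\R^d$ that reproduces $\kappa$ on the node set. There is no need to work in the full RKHS $\cH_\kappa$: the computation in the proof of Proposition~\ref{prop:tensor} uses only the inner-product identity, so it applies verbatim with $\phi$ replaced by $i\mapsto u_i$.

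Next, define for each pair $e=\{i,j\}$ the symmetric tensor
\[
\Psi(e)=\tfrac12\left(u_i\otimes u_j+u_j\otimes u_i\right)=\tfrac12\left(u_iu_j^\top+u_ju_i^\top\right)\in\operatorname{Sym}^2(\R^d).
\]
By the argument of Proposition~\ref{prop:tensor}, $(\bK_E)_{ef}=\langle\Psi(e),\Psi(f)\rangle$, where the inner product is the Frobenius inner product on $d\times d$ matrices. Let $\bB$ be the $q\times D$ matrix whose rows are the coordinates of $\Psi(e)$ in an orthonormal basis of $\operatorname{Sym}^2(\R^d)$. Then $\bK_E=\bB\bB^\top$.

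It remains to bound the dimension of that space. $\operatorname{Sym}^2(\R^d)$ is the space of symmetric $d\times d$ matrices, so $D=\dim\operatorname{Sym}^2(\R^d)=\binom{d+1}{2}$. Hence $\rank(\bK_E)=\rank(\bB)\le\binom{d+1}{2}$, and combining with the trivial bound gives the claim. The only point needing care is justifying the passage from the abstract RKHS feature map to the finite-dimensional one. On a finite node set the span of $\{\phi(i)\}$ in $\cH_\kappa$ is isometric to the row space of $\bU$, because both reproduce the same Gram matrix. That step is the main, though minor, obstacle. Everything else is linear algebra.
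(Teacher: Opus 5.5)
Your proposal is correct and follows the paper's own argument: it realizes the rank-$d$ base geometry by features in $\R^d$, lifts them via Proposition~\ref{prop:tensor} into $\operatorname{Sym}^2(\R^d)$ of dimension $\binom{d+1}{2}$, and adds the trivial bound from the size of $\bK_E$. The only difference is that you write out the factorizations $\bK_V=\bU\bU^\top$ and $\bK_E=\bB\bB^\top$ explicitly, whereas the paper leaves them implicit.
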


\begin{proof}
A rank-$d$ base Gram matrix admits a feature representation in $\R^d$. By Proposition~\ref{prop:tensor}, all pair features lie in $\operatorname{Sym}^2(\R^d)$, whose dimension is
\[
\binom{d+1}{2}.
\]
The rank cannot exceed either this feature-space dimension or the number of observed pairs.
\end{proof}

Thus the edge count can grow much faster than the covariance dimension. In the simulations and ABIDE application, this is the source of substantial relational compression. The same algebra appears in pairwise learning \citep{Pahikkala2015}; in the present setting it determines how many covariance directions are available for structured borrowing.

\subsection{Spectrum on the full symmetric product}

Rank tells us how many directions are available, but not how important they are. The spectrum shows whether covariance is concentrated in a few dominant directions or spread broadly across the relation space. That distinction drives effective dimension, regularization, and approximation. We first record the exact full-product spectrum, which serves as the reference point for the loop-free result that follows.

To make the spectral statement precise, let $(\cX,\mu)$ be a measure space and assume that $\kappa$ induces a compact, self-adjoint, positive integral operator
\[
(T_\kappa f)(x)
=
\int_{\cX}\kappa(x,y)f(y)\,d\mu(y)
\]
on $L^2(\mu)$. Let
\[
T_\kappa\varphi_j=\lambda_j\varphi_j,
\qquad
\lambda_j\ge0,
\]
with $\{\varphi_j\}$ orthonormal in $L^2(\mu)$. Consider the full product space $\cX^2$, including diagonal pairs, and let $L^2_{\mathrm{sym}}(\mu\otimes\mu)$ denote the closed subspace of functions satisfying
\[
g(x_1,x_2)=g(x_2,x_1)
\quad\text{a.e.}
\]
The kernel in \eqref{eq:lift} defines an integral operator $T_2$ on this symmetric subspace.

\begin{proposition}[Exact product spectrum on the full symmetric product]
\label{prop:spectrum}
Under the assumptions above, define
\[
\Phi_{jj}(x_1,x_2)
=
\varphi_j(x_1)\varphi_j(x_2)
\]
and, for $j<k$,
\[
\Phi_{jk}(x_1,x_2)
=
\frac{1}{\sqrt2}
\left[
\varphi_j(x_1)\varphi_k(x_2)
+
\varphi_k(x_1)\varphi_j(x_2)
\right].
\]
Then
\[
T_2\Phi_{jj}
=
\lambda_j^2\Phi_{jj}
\]
and
\[
T_2\Phi_{jk}
=
\lambda_j\lambda_k\Phi_{jk},
\qquad j<k.
\]
Thus the nonzero spectrum of $T_2$ on $L^2_{\mathrm{sym}}(\mu\otimes\mu)$ consists exactly of the products
\[
\{\lambda_j^2:j\ge1\}
\cup
\{\lambda_j\lambda_k:j<k\},
\]
counted with multiplicity.
\end{proposition}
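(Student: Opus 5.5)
The plan is to reduce everything to the Mercer expansion of $\kappa$ and then identify $T_2$ with the restriction of $T_\kappa\otimes T_\kappa$ to the symmetric subspace. First, because $T_\kappa$ is compact, self-adjoint and positive, I would write $\kappa(x,y)=\sum_j\lambda_j\varphi_j(x)\varphi_j(y)$, with convergence in $L^2(\mu\otimes\mu)$. Substituting this into \eqref{eq:lift} gives
\[
k_2\bigl((x_1,x_2),(y_1,y_2)\bigr)=\frac12\sum_{j,k}\lambda_j\lambda_k\bigl[\varphi_j(x_1)\varphi_k(x_2)\varphi_j(y_1)\varphi_k(y_2)+\varphi_j(x_1)\varphi_k(x_2)\varphi_k(y_1)\varphi_j(y_2)\bigr].
\]
On the symmetric subspace this can be written as $k_2=\tfrac12\bigl[(\kappa\otimes\kappa)+(\kappa\otimes\kappa)\circ\sigma\bigr]$, where $\sigma$ swaps $y_1$ and $y_2$. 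Hence $T_2=(T_\kappa\otimes T_\kappa)P_{\mathrm{sym}}$, with $P_{\mathrm{sym}}g=\tfrac12(g+g\circ\sigma)$. In particular, $T_2 g=(T_\kappa\otimes T_\kappa)g$ whenever $g$ is symmetric.

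Second comes the eigen-relations. Since $\varphi_j\otimes\varphi_k$ is an eigenfunction of $T_\kappa\otimes T_\kappa$ with eigenvalue $\lambda_j\lambda_k$ (Fubini), and both $\Phi_{jj}$ and $\Phi_{jk}$ are symmetric, the formulas $T_2\Phi_{jj}=\lambda_j^2\Phi_{jj}$ and $T_2\Phi_{jk}=\lambda_j\lambda_k\Phi_{jk}$ follow at once. One can also verify them directly by integrating the kernel against $\Phi_{jk}$ and using the orthonormality of $\{\varphi_j\}$. Orthonormality of the $\Phi$'s in $L^2(\mu\otimes\mu)$ is a routine computation: the $1/\sqrt2$ normalizes the cross terms, and $\langle\Phi_{jk},\Phi_{lm}\rangle=0$ unless $\{j,k\}=\{l,m\}$.

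The main obstacle is the claim that these products are \emph{exactly} the nonzero spectrum with multiplicity, i.e.\ completeness. My approach is to use the decomposition $L^2(\mu)=\overline{\operatorname{span}}\{\varphi_j\}\oplus\ker T_\kappa$. Tensor products of an orthonormal basis of $L^2(\mu)$ give a basis of $L^2(\mu\otimes\mu)$, assuming separability or working with the Hilbert tensor product. So $L^2_{\mathrm{sym}}$ is spanned by the $\Phi_{jk}$ ($j\le k$) together with symmetric functions having at least one factor in $\ker T_\kappa$. On that second piece $T_\kappa\otimes T_\kappa$ vanishes, so it lies in $\ker T_2$. Then $T_2$ is diagonal in the orthonormal system $\{\Phi_{jk}\}_{j\le k}$ and zero on its orthogonal complement. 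Because $T_2$ is self-adjoint and compact (its Hilbert--Schmidt norm is bounded by that of $T_\kappa\otimes T_\kappa$, or compactness is inherited), its nonzero eigenvalues, counted with multiplicity, are exactly $\{\lambda_j^2\}\cup\{\lambda_j\lambda_k:j<k\}$.

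The delicate points are identifying $T_2$ with $T_\kappa\otimes T_\kappa$ on $L^2_{\mathrm{sym}}$ rigorously, and handling repeated $\lambda$ values. The latter is fine because multiplicity is counted through the orthonormal eigenbasis rather than through distinct values.
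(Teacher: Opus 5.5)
Your proposal is correct and follows the paper's route: swapping $y_1$ and $y_2$ identifies $T_2$ with $T_\kappa\otimes T_\kappa$ on $L^2_{\mathrm{sym}}(\mu\otimes\mu)$, the symmetrized tensors $\Phi_{jk}$ are then eigenfunctions, and completeness of this system finishes the argument. Your decomposition $L^2(\mu)=\overline{\operatorname{span}}\{\varphi_j\}\oplus\ker T_\kappa$ spells out the exhaustiveness step that the paper only asserts. The opening Mercer expansion is unnecessary, and it would tacitly require $\kappa\in L^2(\mu\otimes\mu)$, since the swap argument alone already gives $T_2=(T_\kappa\otimes T_\kappa)P_{\mathrm{sym}}$.
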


\begin{proof}
Let $T_\kappa\otimes T_\kappa$ denote the tensor-product operator on
$L^2(\mu\otimes\mu)$. For a symmetric function $g$,
\begin{align*}
(T_2g)(x_1,x_2)
&=
\frac12
\iint
\Big[
\kappa(x_1,y_1)\kappa(x_2,y_2)
+
\kappa(x_1,y_2)\kappa(x_2,y_1)
\Big]
g(y_1,y_2)\,d\mu(y_1)d\mu(y_2).
\end{align*}
In the second integral, interchange $y_1$ and $y_2$. Because
$g(y_1,y_2)=g(y_2,y_1)$ almost everywhere, the two terms are equal. Hence
\[
T_2g=(T_\kappa\otimes T_\kappa)g
\]
on the symmetric subspace. The tensor-product operator satisfies
\[
(T_\kappa\otimes T_\kappa)
(\varphi_j\otimes\varphi_k)
=
\lambda_j\lambda_k
(\varphi_j\otimes\varphi_k).
\]
The diagonal tensor $\Phi_{jj}$ is already symmetric. For $j<k$, the normalized
symmetric combination $\Phi_{jk}$ is also an eigenfunction with the same product
eigenvalue. These functions form an orthonormal basis of the symmetric tensor
subspace generated by the eigenfunctions of $T_\kappa$, which proves the result.
\end{proof}

Proposition~\ref{prop:spectrum} explains why low-dimensional structure at the node level can become even more concentrated on relationships: small node eigenvalues are multiplied when pair directions are formed. The result also supplies a benchmark against which the practical loop-free covariance can be compared. The difficulty is that ordinary network data do not contain the self-pairs required by this exact symmetric-product representation.

\subsection{What changes when self-pairs are removed?}

The exact product spectrum is available on a domain that includes self-pairs, but ordinary undirected networks do not include them. Removing the $\{r,r\}$ coordinates changes the covariance and generally destroys exact product eigenvalues. We therefore need a result that connects the convenient full-product spectrum to the loop-free covariance used for inference. The next theorem provides that connection without requiring a new closed-form eigendecomposition.

Let
\[
\overline{\cE}_2
=
\{\{r,s\}:1\le r\le s\le R\}
\]
denote the full finite symmetric pair domain, including self-pairs. Its size is
\[
N=\binom{R+1}{2}.
\]
Let $\bK_+$ be the $N\times N$ Gram matrix induced by $k_2$ on $\overline{\cE}_2$, and let $\bK_E$ be the principal submatrix obtained by deleting the $R$ self-pair indices $\{r,r\}$. Write
\[
\mu_1\ge\cdots\ge\mu_N\ge0
\]
for the eigenvalues of $\bK_+$ and
\[
\nu_1\ge\cdots\ge\nu_q\ge0,
\qquad q=\binom R2,
\]
for those of $\bK_E$.

\begin{theorem}[Loop-free spectral interlacing]
\label{thm:loopfree}
For every $j=1,\ldots,q$,
\[
\mu_j
\ge
\nu_j
\ge
\mu_{j+R}.
\]
Consequently,
\[
\rank(\bK_+)-R
\le
\rank(\bK_E)
\le
\rank(\bK_+).
\]
Moreover,
\[
\tr(\bK_+)-\tr(\bK_E)
=
\sum_{r=1}^R \kappa(r,r)^2.
\]
If the base kernel is normalized so that $\kappa(r,r)=1$ for all $r$, then the deleted self-pairs remove exactly $R$ units of trace and
\[
\frac{\tr(\bK_+)-\tr(\bK_E)}
{\tr(\bK_+)}
\le
\frac{4}{R+1}.
\]
\end{theorem}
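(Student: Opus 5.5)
The plan is to treat $\bK_E$ as an $(N-R)\times(N-R)$ principal submatrix of the symmetric positive-semidefinite matrix $\bK_+$. Both matrices are real symmetric, because $k_2(e,f)=k_2(f,e)$. They are positive semidefinite by Proposition~\ref{prop:tensor}, applied on $\overline{\cE}_2$: the feature representation $\Psi$ makes sense for self-pairs as well, with $\Psi(\{r,r\})=\phi(r)\otimes\phi(r)$. The interlacing inequalities then come from the Cauchy interlacing theorem for principal submatrices. In its general form, deleting $m$ rows and the matching columns from an $N\times N$ Hermitian matrix gives eigenvalues satisfying $\mu_j\ge\nu_j\ge\mu_{j+m}$ for $j=1,\ldots,N-m$. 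If I want to avoid quoting the general version, I will instead iterate the one-row deletion version $R$ times, or run a Courant--Fischer argument directly. Taking $m=R$ and $N-R=q$ gives exactly the first display.

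The rank statement will be read off from interlacing, using that all eigenvalues are nonnegative. Because $\nu_j\le\mu_j$, every index $j>\rank(\bK_+)$ has $\nu_j=0$, so $\rank(\bK_E)\le\rank(\bK_+)$. Because $\nu_j\ge\mu_{j+R}$, every $j$ with $j+R\le\rank(\bK_+)$ has $\nu_j>0$. This gives at least $\rank(\bK_+)-R$ positive $\nu_j$ (trivially true if that number is negative), and all of these indices lie in $1,\ldots,q$ since $\rank(\bK_+)\le N=q+R$.

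For the trace identity, note that deleting a principal block removes exactly those diagonal entries, so $\tr(\bK_+)-\tr(\bK_E)=\sum_r k_2(\{r,r\},\{r,r\})$. From \eqref{eq:lift} with $x_1=x_2=y_1=y_2=r$, each term equals $\tfrac12[\kappa(r,r)^2+\kappa(r,r)^2]=\kappa(r,r)^2$.

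The last inequality is the only step that needs a genuine estimate, namely a lower bound on $\tr(\bK_+)$. Under the normalization $\kappa(r,r)=1$, each diagonal entry for $r\neq s$ is $\tfrac12[\kappa(r,r)\kappa(s,s)+\kappa(r,s)^2]=\tfrac12[1+\kappa(r,s)^2]\ge\tfrac12$. It follows that
\[
\tr(\bK_+)\ge R+\tfrac12\binom R2=\frac{R(R+3)}{4}.
\]
The ratio is therefore at most $R\big/\{R(R+3)/4\}=4/(R+3)\le 4/(R+1)$. This is slightly stronger than the stated bound, and the stated bound follows. The main thing to watch is the bookkeeping: use the general $m$-deletion form of interlacing with indices aligned so that $j$ runs over $1,\ldots,q$ and $\mu_{j+R}$ is defined, because $q+R=N$.
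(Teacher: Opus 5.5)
Your proof is correct and follows the same route as the paper: Cauchy interlacing for the principal submatrix obtained by deleting the $R$ self-pair indices, the diagonal entries $\kappa(r,r)^2$ for the trace identity, and a lower bound on $\tr(\bK_+)$ for the normalized ratio. The differences are small refinements. Deriving the rank inequalities from interlacing plus nonnegativity of the eigenvalues makes explicit the positive semidefiniteness that the paper's one-line claim (deleting $R$ rows and matching columns lowers rank by at most $R$) depends on, since for an indefinite symmetric matrix the drop can be $2R$. Using the exact self-pair diagonal value $1$ gives the sharper bound $4/(R+3)$, in place of the paper's bound $\tr(\bK_+)\ge N/2$ which yields $4/(R+1)$.
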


\begin{proof}
The matrix $\bK_E$ is obtained from $\bK_+$ by deleting the same $R$ rows and columns. Cauchy's eigenvalue interlacing theorem for principal submatrices gives
\[
\mu_j\ge\nu_j\ge\mu_{j+R},
\qquad j=1,\ldots,q.
\]
Deleting $R$ rows and the corresponding columns can reduce rank by at most $R$, while a principal submatrix cannot have rank greater than the full matrix. This gives the rank inequalities.

For a self-pair $\{r,r\}$,
\[
k_2(\{r,r\},\{r,r\})
=
\frac12\{\kappa(r,r)^2+\kappa(r,r)^2\}
=
\kappa(r,r)^2.
\]
The trace identity follows because the trace of a principal submatrix is obtained by removing the diagonal entries corresponding to the deleted coordinates.

If $\kappa(r,r)=1$, each deleted self-pair contributes one unit of trace. For an off-diagonal pair $\{r,s\}$,
\[
k_2(\{r,s\},\{r,s\})
=
\frac12\{1+\kappa(r,s)^2\}
\ge \frac12.
\]
The same lower bound also holds for self-pairs. Therefore
\[
\tr(\bK_+)\ge \frac N2,
\]
and hence
\[
\frac{R}{\tr(\bK_+)}
\le
\frac{2R}{N}
=
\frac{4}{R+1}.
\]
\end{proof}

\begin{remark}
Proposition~\ref{prop:spectrum} gives the exact product spectrum on the full symmetric space. Theorem~\ref{thm:loopfree} gives deterministic spectral bands for the loop-free network domain actually used in applications. It does not assert that individual loop-free eigenvalues equal particular products of node eigenvalues; it quantifies how their ordered positions are constrained after self-pairs are removed.
\end{remark}

Theorem~\ref{thm:loopfree} makes loop removal quantifiable. Interlacing locates the loop-free eigenvalues relative to the full-product spectrum. The rank bound limits the dimensional change, and the trace identity measures the covariance removed. Together these results show how much of the full symmetric structure survives on the network domain used in practice \citep{Pahikkala2015,Stock2018}.

\subsection{Effective dimension}

Exact rank is often too crude when all eigenvalues are positive but most covariance mass is concentrated in a small part of the spectrum. We therefore report both the spectral effective-dimension summary
\[
r_{\mathrm{eff}}(\bK)
=
\frac{\tr(\bK)}{\|\bK\|_2},
\]
and the entropy effective rank of \citet{RoyVetterli2007},
\[
r_{\mathrm{ent}}(\bK)
=
\exp\left\{
-\sum_j p_j\log p_j
\right\},
\qquad
p_j=\frac{\lambda_j}{\sum_\ell\lambda_\ell}.
\]
These summaries answer a different question from exact rank. Numerical rank counts every direction that is nonzero above a tolerance, whereas effective-rank summaries describe how broadly covariance mass is actually distributed. We use all three notions in the simulation and application sections because a covariance may have many nonzero directions while still behaving as a much lower-dimensional object for approximation and regularization.

\section{From Covariance Geometry to Statistical Estimation}
\label{sec:stat}

The spectrum becomes scientifically meaningful through estimation. In a regularized model, the covariance determines which relational patterns are treated as plausible and how strongly different directions are shrunk. We use a Gaussian sequence representation to isolate that mechanism without the additional structure of an application-specific hierarchy \citep{RasmussenWilliams2006,Kanagawa2018}.

\subsection{A Gaussian sequence representation}

Consider
\begin{equation}
\label{eq:sequence}
\widehat{\bbeta}
=
\bbeta+\beps,
\qquad
\beps\sim N_q(\bm0,v\bI_q).
\end{equation}
The vector $\bbeta$ is a relationship-valued effect surface on $\cE_2(\cX)$. Suppose
\begin{equation}
\label{eq:prior}
\bbeta\sim N_q(\bm0,\tau^2\bK_E),
\end{equation}
where $\bK_E$ is an induced pair-space covariance. Let
\[
\bK_E
=
\bU\operatorname{diag}(\lambda_1,\ldots,\lambda_q)\bU^\top.
\]

The posterior mean under this model is important because it gives a direct operational meaning to the spectrum developed in Section~\ref{sec:spectral}. Rather than merely describing covariance variation, each eigenvalue controls a specific amount of shrinkage in the corresponding relational direction.

\begin{proposition}[Spectral shrinkage induced by relational covariance]
\label{prop:shrink}
Under \eqref{eq:sequence}--\eqref{eq:prior},
\[
\E(\bbeta\mid\widehat{\bbeta})
=
\bU
\operatorname{diag}
\left\{
\frac{\tau^2\lambda_j}{\tau^2\lambda_j+v}
\right\}
\bU^\top
\widehat{\bbeta}.
\]
\end{proposition}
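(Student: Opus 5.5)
The plan is to compute the posterior mean by standard Gaussian conditioning and then diagonalize it in the eigenbasis of $\bK_E$. Under \eqref{eq:sequence}--\eqref{eq:prior}, the pair $(\bbeta,\widehat{\bbeta})$ is jointly Gaussian with mean zero. Its blocks are
\[
\Cov(\bbeta)=\tau^2\bK_E,\qquad
\Cov(\bbeta,\widehat{\bbeta})=\tau^2\bK_E,\qquad
\Cov(\widehat{\bbeta})=\tau^2\bK_E+v\bI_q,
\]
using independence of $\beps$ and $\bbeta$. Since $v>0$ and $\bK_E$ is positive semidefinite (by Proposition~\ref{prop:tensor}, as a principal submatrix of a Gram matrix), the matrix $\tau^2\bK_E+v\bI_q$ is positive definite and hence invertible. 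The conditional-mean formula for jointly Gaussian vectors then gives
\[
\E(\bbeta\mid\widehat{\bbeta})=\tau^2\bK_E(\tau^2\bK_E+v\bI_q)^{-1}\widehat{\bbeta}.
\]

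Next I would substitute $\bK_E=\bU\Lambda\bU^\top$ with $\bU$ orthogonal and $\Lambda=\operatorname{diag}(\lambda_1,\ldots,\lambda_q)$. Because $\bI_q=\bU\bU^\top$, we have $\tau^2\bK_E+v\bI_q=\bU(\tau^2\Lambda+v\bI_q)\bU^\top$, whose inverse is $\bU(\tau^2\Lambda+v\bI_q)^{-1}\bU^\top$. Multiplying by $\tau^2\bU\Lambda\bU^\top$ and cancelling $\bU^\top\bU=\bI_q$ gives $\bU\,\tau^2\Lambda(\tau^2\Lambda+v\bI_q)^{-1}\bU^\top$. This is $\bU$ times a diagonal matrix with entries $\tau^2\lambda_j/(\tau^2\lambda_j+v)$ times $\bU^\top$, which is the claim.

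An alternative that avoids any invertibility of $\bK_E$ is to rotate to $\bm\theta=\bU^\top\bbeta$ and $\bm z=\bU^\top\widehat{\bbeta}$. The model then decouples into independent scalar problems $z_j=\theta_j+\epsilon_j$ with $\theta_j\sim N(0,\tau^2\lambda_j)$ and $\epsilon_j\sim N(0,v)$. Each has posterior mean $\tau^2\lambda_j z_j/(\tau^2\lambda_j+v)$, and rotating back recovers the statement.

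The only point needing care is that $\bK_E$ may be singular, which the loop-free rank results in Theorem~\ref{thm:loopfree} and Proposition~\ref{prop:rank} suggest is typical. This is why I avoid the precision-form expression $(\bK_E^{-1}/\tau^2+\bI_q/v)^{-1}$ and use the covariance form, which only requires inverting $\tau^2\bK_E+v\bI_q$. Zero eigenvalues then correctly give zero shrinkage factors, so those relational directions are shrunk entirely to zero. Beyond this there is no real obstacle; the result is a routine computation.
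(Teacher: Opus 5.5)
Your proposal is correct and follows essentially the same route as the paper: Gaussian conditioning gives $\tau^2\bK_E(\tau^2\bK_E+v\bI_q)^{-1}\widehat{\bbeta}$, and diagonalizing $\bK_E$ yields the stated spectral form. Your point that only $\tau^2\bK_E+v\bI_q$ needs to be inverted, which is positive definite since $v>0$, so a singular $\bK_E$ causes no difficulty, is a useful detail the paper leaves implicit.
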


\begin{proof}
Gaussian conditioning gives
\[
\E(\bbeta\mid\widehat{\bbeta})
=
\tau^2\bK_E
(\tau^2\bK_E+v\bI_q)^{-1}
\widehat{\bbeta}.
\]
Diagonalization of $\bK_E$ yields the stated form.
\end{proof}

Proposition~\ref{prop:shrink} provides the bridge from geometry to inference. The pair-space covariance does not smooth all directions equally. A direction with large $\lambda_j$ is preserved relatively strongly, while a direction assigned little covariance mass is shrunk toward zero. The object-level geometry therefore determines which distributed patterns of relations are treated as plausible signal. This is the mechanism by which scientifically related connections borrow information.

\subsection{When does structured borrowing help?}

Structured borrowing is useful only if the reduction in variance compensates for the bias introduced by shrinkage. The covariance can therefore improve estimation when the scientific geometry is informative, but it can also be harmful when the true effect lies in directions that the geometry strongly suppresses. A risk calculation is needed to make that tradeoff explicit and to avoid claiming unconditional superiority of the structured estimator.

Let
\[
\theta_j=(\bU^\top\bbeta)_j,
\qquad
s_j=\frac{\tau^2\lambda_j}{\tau^2\lambda_j+v}.
\]

\begin{theorem}[Risk consequences of relational covariance]
\label{thm:risk}
Let
\[
s_j=\frac{\tau^2\lambda_j}{\tau^2\lambda_j+v},
\qquad
\theta_j=(\bU^\top\bbeta)_j.
\]
Then:

\begin{enumerate}[label=(\roman*),leftmargin=2.1em]
\item Conditional on a fixed effect surface $\bbeta$, the posterior-mean estimator
$\widetilde{\bbeta}=\E(\bbeta\mid\widehat{\bbeta})$ has quadratic risk
\[
\E_{\bbeta}
\|\widetilde{\bbeta}-\bbeta\|^2
=
\sum_{j=1}^q
\left[
(1-s_j)^2\theta_j^2
+
vs_j^2
\right],
\]
while the unregularized estimator $\widehat{\bbeta}$ has risk $qv$.

\item Under the joint Gaussian model \eqref{eq:sequence}--\eqref{eq:prior}, the integrated Bayes risk is
\[
R_B(\widetilde{\bbeta})
=
\sum_{j=1}^q
\frac{v\tau^2\lambda_j}
{v+\tau^2\lambda_j},
\]
whereas the integrated risk of $\widehat{\bbeta}$ is $qv$.

\item For every finite nonempty relational domain with $v>0$ and $0<\tau^2<\infty$,
\[
R_B(\widetilde{\bbeta})<qv.
\]
This strict Bayes-risk improvement does not imply uniform frequentist dominance for every fixed $\bbeta$.
\end{enumerate}
\end{theorem}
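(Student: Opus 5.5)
The whole argument will run in the eigenbasis of $\bK_E$, where everything decouples coordinatewise. By Proposition~\ref{prop:shrink}, $\widetilde{\bbeta}=\bU\bm{S}\bU^\top\widehat{\bbeta}$ with $\bm{S}=\operatorname{diag}(s_j)$. Because $\bU$ is orthogonal, $\|\widetilde{\bbeta}-\bbeta\|^2=\|\bU^\top\widetilde{\bbeta}-\bU^\top\bbeta\|^2$. Write $\widehat\theta_j=(\bU^\top\widehat{\bbeta})_j=\theta_j+\eta_j$, where $\boldsymbol\eta=\bU^\top\beps\sim N_q(\bm0,v\bI_q)$ by rotation invariance of the isotropic Gaussian. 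The problem then reduces to $q$ independent scalar problems: the $j$th coordinate of the estimator is $s_j(\theta_j+\eta_j)$, with error $-(1-s_j)\theta_j+s_j\eta_j$.

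For part (i), fix $\bbeta$, so the $\theta_j$ are constants. Expanding the square of the coordinate error, the cross term has mean zero because $\E\eta_j=0$. The expected squared error is therefore $(1-s_j)^2\theta_j^2+s_j^2v$, and summing over $j$ gives the stated risk. The unregularized estimator corresponds to $s_j\equiv1$, giving risk $\E\|\beps\|^2=qv$.

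For part (ii), integrate the part (i) risk over the prior. Under \eqref{eq:prior}, $\bU^\top\bbeta\sim N_q(\bm0,\tau^2\operatorname{diag}(\lambda_j))$, so $\E\theta_j^2=\tau^2\lambda_j$. The $j$th term becomes $(1-s_j)^2\tau^2\lambda_j+vs_j^2$. Substitute $1-s_j=v/(\tau^2\lambda_j+v)$ and $s_j=\tau^2\lambda_j/(\tau^2\lambda_j+v)$. The numerator is $v^2\tau^2\lambda_j+v\tau^4\lambda_j^2=v\tau^2\lambda_j(v+\tau^2\lambda_j)$ over $(v+\tau^2\lambda_j)^2$, which simplifies to $v\tau^2\lambda_j/(v+\tau^2\lambda_j)$. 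As a cross-check, this is the posterior variance in direction $j$. The risk of $\widehat{\bbeta}$ does not depend on $\bbeta$, so it remains $qv$.

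For part (iii), compare termwise. Since $v>0$, $\tau^2<\infty$ and $\lambda_j\ge0$, each term satisfies $v\tau^2\lambda_j/(v+\tau^2\lambda_j)=v\cdot s_j<v$, because $s_j<1$ whenever $v>0$. Summing over the $q\ge1$ terms gives the strict inequality; zero eigenvalues only make the inequality stronger. Note that this uses only $v>0$, not $\tau^2>0$.

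For the caveat about frequentist dominance, I will exhibit a counterexample from part (i). Take any direction with $\lambda_j$ small, for instance $\lambda_j=0$ so that $s_j=0$, and put $\bbeta=c\,\bU e_j$. The risk is then at least $(1-s_j)^2c^2$, which exceeds $qv$ once $c$ is large. If every $\lambda_j>0$, the same construction works: $1-s_j>0$, so the bias term still grows without bound as $c\to\infty$.

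I do not expect a genuine obstacle. The only step needing care is the rotation-invariance claim that makes the $\eta_j$ independent $N(0,v)$ variables, together with the algebraic simplification in part (ii).
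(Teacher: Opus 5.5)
Your proof is correct and follows the paper's argument essentially step for step: you rotate into the eigenbasis of $\bK_E$, use the coordinatewise bias--variance decomposition for (i), integrate against $\theta_j\sim N(0,\tau^2\lambda_j)$ for (ii), and compare each term with $v$ for (iii). Your explicit counterexample $\bbeta=c\,\bU e_j$ with $c\to\infty$ for the non-dominance caveat is a useful addition, since the paper states that remark without proving it.
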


\begin{proof}
Rotate \eqref{eq:sequence} into the eigenbasis of $\bK_E$. Then
\[
\widehat\theta_j=\theta_j+\epsilon_j,
\qquad
\epsilon_j\sim N(0,v),
\]
and Proposition~\ref{prop:shrink} gives the estimator $s_j\widehat\theta_j$ in direction $j$. Its error is
\[
s_j\widehat\theta_j-\theta_j
=
-(1-s_j)\theta_j+s_j\epsilon_j.
\]
Taking expectation over the noise gives part (i). For part (ii), integrate with respect to
\[
\theta_j\sim N(0,\tau^2\lambda_j),
\]
which yields
\[
(1-s_j)^2\tau^2\lambda_j+vs_j^2
=
\frac{v\tau^2\lambda_j}{v+\tau^2\lambda_j}.
\]
The unregularized estimator has error $\beps$, hence integrated risk $qv$. Finally,
\[
0\le
\frac{v\tau^2\lambda_j}{v+\tau^2\lambda_j}
<v
\]
for every finite $\lambda_j\ge0$, proving part (iii).
\end{proof}

Theorem~\ref{thm:risk} makes the promise and the limitation of structured borrowing explicit. When the geometry aligns with the effect, shrinkage can substantially reduce risk. When the signal is concentrated in directions that the covariance suppresses, the induced bias can outweigh that gain. The geometry is therefore a scientific modeling assumption to be examined, not a guarantee of improvement. The simulations include both aligned and deliberately adverse signals for this reason.
\section{Approximation and Stability of the Relational Covariance}
\label{sec:approx}

Two practical issues remain before the covariance can be used confidently at scale. Computation may require a low-rank approximation, and the object-level geometry may itself be uncertain. These are different sources of change: truncation is deliberate, while geometric perturbation reflects uncertainty in the scientific representation. The results in this section quantify how each change affects relational inference, building on classical approximation and perturbation theory \citep{WilliamsSeeger2001,Bhatia1997,HornJohnson2013,StewartSun1990,DavisKahan1970}.

\subsection{Low-rank approximation}

Spectral compression suggests replacing the full covariance by a lower-rank approximation. The key question is how far that compression can go without changing the estimator materially. Retaining 95\% of covariance trace measures reconstruction, not inferential fidelity, so the two criteria must be separated.

Write
\[
\bK_E
=
\sum_{j=1}^{q}
\lambda_j\bm u_j\bm u_j^\top,
\qquad
\lambda_1\ge\cdots\ge\lambda_q\ge0,
\]
and define
\[
\bK_{E,r}
=
\sum_{j=1}^r
\lambda_j\bm u_j\bm u_j^\top.
\]

\begin{remark}[Classical spectral truncation]
\label{rem:trunc}
The rank-$r$ approximation satisfies
\[
\|\bK_E-\bK_{E,r}\|_2=\lambda_{r+1},
\]
and
\[
\|\bK_E-\bK_{E,r}\|_F^2
=
\sum_{j>r}\lambda_j^2.
\]
\end{remark}

Define
\[
S(\bK)
=
\tau^2\bK(\tau^2\bK+v\bI)^{-1}.
\]

Before exploiting the special structure of spectral truncation, it is useful to establish a general continuity property. This lemma guarantees that a small operator perturbation of the covariance cannot produce an arbitrarily large change in the shrinkage operator.

\begin{lemma}[Lipschitz continuity of the shrinkage map]
\label{lem:lipschitz}
For positive-semidefinite matrices $\bK$ and $\widetilde{\bK}$,
\[
\|S(\bK)-S(\widetilde{\bK})\|_2
\le
\frac{\tau^2}{v}
\|\bK-\widetilde{\bK}\|_2.
\]
\end{lemma}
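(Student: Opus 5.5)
The plan is to rewrite the shrinkage map in resolvent form and then use the second resolvent identity. Set $\bm A=\tau^2\bK$ and $\widetilde{\bm A}=\tau^2\widetilde{\bK}$. Both are positive semidefinite, so $\bm A+v\bI$ and $\widetilde{\bm A}+v\bI$ are invertible. The first step is the algebraic identity
\[
S(\bK)=\bm A(\bm A+v\bI)^{-1}=\bI-v(\bm A+v\bI)^{-1},
\]
which holds because $\bm A$ commutes with its own resolvent. The identity term cancels in the difference, leaving
\[
S(\bK)-S(\widetilde{\bK})=v\left[(\widetilde{\bm A}+v\bI)^{-1}-(\bm A+v\bI)^{-1}\right].
\]

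Next I apply the resolvent identity $\bm X^{-1}-\bm Y^{-1}=\bm X^{-1}(\bm Y-\bm X)\bm Y^{-1}$ with $\bm X=\widetilde{\bm A}+v\bI$ and $\bm Y=\bm A+v\bI$. This gives
\[
S(\bK)-S(\widetilde{\bK})=v(\widetilde{\bm A}+v\bI)^{-1}(\bm A-\widetilde{\bm A})(\bm A+v\bI)^{-1}.
\]
Submultiplicativity of the spectral norm then bounds the norm of the difference by $v\,\|(\widetilde{\bm A}+v\bI)^{-1}\|_2\,\tau^2\|\bK-\widetilde{\bK}\|_2\,\|(\bm A+v\bI)^{-1}\|_2$.

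Each resolvent is symmetric with eigenvalues $1/(\tau^2\lambda+v)$, where $\lambda\ge0$. Hence each has spectral norm at most $1/v$, and this is the only place positive semidefiniteness is used. Combining the factors gives $v\cdot v^{-1}\cdot v^{-1}\cdot\tau^2\|\bK-\widetilde{\bK}\|_2=(\tau^2/v)\|\bK-\widetilde{\bK}\|_2$, which is the claimed bound.

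None of these steps is a serious obstacle. The one point needing care is that $\bK$ and $\widetilde{\bK}$ need not commute, so the difference cannot be diagonalized simultaneously. The resolvent identity avoids this entirely, because it never requires a common eigenbasis.
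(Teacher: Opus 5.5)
Your proposal is correct and follows the paper's proof exactly: rewrite $S(\bK)=\bI-v(\tau^2\bK+v\bI)^{-1}$, apply the resolvent identity, use positive semidefiniteness to get $\|(\tau^2\bK+v\bI)^{-1}\|_2\le v^{-1}$ (and likewise for $\widetilde{\bK}$), and conclude by submultiplicativity. One small point: the identity $\bm A(\bm A+v\bI)^{-1}=\bI-v(\bm A+v\bI)^{-1}$ does not need any commutation argument, since it follows directly from writing $\bm A=(\bm A+v\bI)-v\bI$.
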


\begin{proof}
Use
\[
S(\bK)
=
\bI-v(\tau^2\bK+v\bI)^{-1}
\]
and the resolvent identity
\[
A^{-1}-B^{-1}=A^{-1}(B-A)B^{-1}.
\]
Because $\bK$ and $\widetilde{\bK}$ are positive semidefinite,
\[
\|(\tau^2\bK+v\bI)^{-1}\|_2\le v^{-1}
\]
and the same bound holds for $\widetilde{\bK}$. Submultiplicativity then gives
the result.
\end{proof}

Lemma~\ref{lem:lipschitz} supplies a general stability guarantee, but it can be conservative because it treats the covariance perturbation only through its operator norm. Spectral truncation is more structured: the retained and omitted directions are known exactly. That additional structure permits an exact inferential error rather than only an upper bound.

The next result answers the practical rank-selection question directly: after retaining the first $r$ relational directions, what is the largest possible change in the shrinkage operator? The answer depends only on the first omitted eigenvalue and the signal-to-noise scaling.

\begin{theorem}[Inferential error under spectral truncation]
\label{thm:exacttrunc}
Let
\[
\bK_E
=
\bU\operatorname{diag}(\lambda_1,\ldots,\lambda_q)\bU^\top,
\qquad
\lambda_1\ge\cdots\ge\lambda_q\ge0,
\]
and let $\bK_{E,r}$ retain the first $r<q$ eigenpairs. Then
\[
\|S(\bK_E)-S(\bK_{E,r})\|_2
=
\frac{\tau^2\lambda_{r+1}}
{\tau^2\lambda_{r+1}+v}.
\]
Consequently,
\[
\|
S(\bK_E)\widehat{\bbeta}
-
S(\bK_{E,r})\widehat{\bbeta}
\|
\le
\frac{\tau^2\lambda_{r+1}}
{\tau^2\lambda_{r+1}+v}
\|\widehat{\bbeta}\|.
\]
For $r=q$, both quantities are zero.
\end{theorem}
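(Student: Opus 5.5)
The approach is to diagonalize both shrinkage operators in the common eigenbasis $\bU$ of $\bK_E$. Since $\bK_{E,r}=\bU\operatorname{diag}(\lambda_1,\ldots,\lambda_r,0,\ldots,0)\bU^\top$ shares the eigenvectors of $\bK_E$, the same calculation as in Proposition~\ref{prop:shrink} gives
\[
S(\bK_E)=\bU\operatorname{diag}(s_1,\ldots,s_q)\bU^\top,
\qquad
S(\bK_{E,r})=\bU\operatorname{diag}(s_1,\ldots,s_r,0,\ldots,0)\bU^\top,
\]
with $s_j=\tau^2\lambda_j/(\tau^2\lambda_j+v)$. Here I use $g(0)=0$ for $g(\lambda)=\tau^2\lambda/(\tau^2\lambda+v)$; this is legitimate because $v>0$, so $g$ is defined at $0$ and the resolvent $(\tau^2\bK+v\bI)^{-1}$ exists even though $\bK_{E,r}$ is singular.

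Subtracting the two expressions gives
\[
S(\bK_E)-S(\bK_{E,r})=\bU\operatorname{diag}(0,\ldots,0,s_{r+1},\ldots,s_q)\bU^\top.
\]
This matrix is symmetric positive semidefinite, and since $\bU$ is orthogonal its spectral norm equals $\max_{j>r}s_j$. The map $g$ has derivative $\tau^2v/(\tau^2\lambda+v)^2>0$, so it is nondecreasing on $[0,\infty)$ (strictly increasing when $\tau^2>0$). Together with the ordering $\lambda_{r+1}\ge\cdots\ge\lambda_q$, this makes the maximum equal to $s_{r+1}$, which is exactly the claimed identity.

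The vector bound then follows from $\|A\widehat{\bbeta}\|\le\|A\|_2\|\widehat{\bbeta}\|$. For $r=q$ we have $\bK_{E,q}=\bK_E$, so both quantities vanish; the formula's $\lambda_{q+1}$ is undefined there, which is why the statement treats this case separately.

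The main point needing care is the simultaneous diagonalization. If eigenvalues are repeated, in particular if $\lambda_r=\lambda_{r+1}$, the truncation depends on which eigenvectors $\bU$ retains. I will fix $\bU$ as the matrix used to define $\bK_{E,r}$, so that both operators are functions of commuting diagonal matrices in the same basis. The norm computation is unaffected by this choice, since it depends only on the multiset $\{s_j:j>r\}$. The computation does not rely on Lemma~\ref{lem:lipschitz}; that lemma only gives the weaker bound $\tau^2\lambda_{r+1}/v$, which our exact value improves by the factor $v/(\tau^2\lambda_{r+1}+v)$.
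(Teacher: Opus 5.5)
Your proposal is correct and follows the paper's proof essentially step for step: you diagonalize both shrinkage operators in the common eigenbasis $\bU$, write the difference as $\bU\operatorname{diag}(0,\ldots,0,s_{r+1},\ldots,s_q)\bU^\top$, and use monotonicity of $\lambda\mapsto\tau^2\lambda/(\tau^2\lambda+v)$ to identify its norm as $s_{r+1}$. Your additional remarks are accurate refinements rather than a different route: $v>0$ makes $S(\bK_{E,r})$ well defined, $\bU$ must be fixed when eigenvalues tie, and the exact value improves on Lemma~\ref{lem:lipschitz}.
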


\begin{proof}
In the eigenbasis of $\bK_E$,
\[
S(\bK_E)
=
\bU\operatorname{diag}(s_1,\ldots,s_q)\bU^\top,
\qquad
s_j=\frac{\tau^2\lambda_j}{\tau^2\lambda_j+v}.
\]
Because $\bK_{E,r}$ keeps the same first $r$ eigenvectors and sets the remaining
eigenvalues to zero,
\[
S(\bK_{E,r})
=
\bU\operatorname{diag}(s_1,\ldots,s_r,0,\ldots,0)\bU^\top.
\]
Hence the difference has eigenvalues $0$ for $j\le r$ and $s_j$ for $j>r$.
The function $\lambda\mapsto \tau^2\lambda/(\tau^2\lambda+v)$ is increasing on
$[0,\infty)$, so the largest omitted shrinkage factor is $s_{r+1}$. This proves
the exact operator-norm identity. The estimator bound follows by applying the
operator to $\widehat{\bbeta}$.
\end{proof}

The identity turns a numerical truncation into an inferential statement. Explained trace alone summarizes how much covariance mass is retained; it does not determine how much the estimator changes. The first omitted eigenvalue enters through the nonlinear shrinkage factor, together with the covariance scale and noise variance. This distinction becomes important in the ABIDE application, where very strong spectral compression does not always imply a small shrinkage-operator error at the same retained rank.

\subsection{Perturbation of the base geometry}

Truncation changes a known covariance by design. Geometry error enters earlier: the node representation used to build the covariance may itself be measured or estimated. Anatomical coordinates can be imperfect, and graph embeddings depend on finite data. A useful stability result must therefore follow the perturbation from the node kernel to the pair covariance and then to the estimator.

A node-level kernel may be estimated from data, a graph embedding, or noisy object features. Let $\kappa$ and $\widetilde\kappa$ satisfy
\[
\sup_{x,y}
|\kappa(x,y)-\widetilde\kappa(x,y)|
\le\epsilon
\]
and
\[
\sup_{x,y}
\max\{|\kappa(x,y)|,|\widetilde\kappa(x,y)|\}
\le M.
\]

The following theorem follows that scientific pipeline explicitly. It first bounds the induced change in the loop-free pair covariance and then carries that change through the shrinkage map. The final entrywise statement gives a complementary local guarantee when the underlying kernels are uniformly close.

\begin{theorem}[Propagation of error from node geometry to relational inference]
\label{thm:perturb}
Let $\bK_V$ and $\widetilde{\bK}_V$ be two positive-semidefinite $R\times R$ node Gram matrices, and let $\bK_E$ and $\widetilde{\bK}_E$ be the corresponding loop-free pair Gram matrices produced by \eqref{eq:lift}. Then
\[
\|\bK_E-\widetilde{\bK}_E\|_2
\le
\frac12
\left(
\|\bK_V\|_2+\|\widetilde{\bK}_V\|_2
\right)
\|\bK_V-\widetilde{\bK}_V\|_2.
\]
Consequently,
\[
\|
S(\bK_E)\widehat{\bbeta}
-
S(\widetilde{\bK}_E)\widehat{\bbeta}
\|
\le
\frac{\tau^2}{2v}
\left(
\|\bK_V\|_2+\|\widetilde{\bK}_V\|_2
\right)
\|\bK_V-\widetilde{\bK}_V\|_2
\|\widehat{\bbeta}\|.
\]

If, in addition,
\[
\sup_{x,y}
|\kappa(x,y)-\widetilde\kappa(x,y)|
\le\epsilon
\]
and
\[
\sup_{x,y}
\max\{|\kappa(x,y)|,|\widetilde\kappa(x,y)|\}
\le M,
\]
then
\[
\sup_{e,f}
|k_2(e,f)-\widetilde k_2(e,f)|
\le
2M\epsilon.
\]
\end{theorem}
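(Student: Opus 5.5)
Our plan is to represent the loop-free pair Gram matrix as a compression of a Kronecker product and then reduce everything to norm bounds on Kronecker products. Let $P$ be the $R^2\times R^2$ orthogonal projector onto symmetric tensors, $P(a\otimes b)=\tfrac12(a\otimes b+b\otimes a)$, and let $J$ be the $R^2\times q$ matrix whose column for $\{r,s\}$, $r<s$, is $\tfrac{1}{\sqrt2}(e_r\otimes e_s+e_s\otimes e_r)$. The columns of $J$ are orthonormal, so $\|J\|_2=1$. A direct entry computation gives
\[
J^\top(\bK_V\otimes\bK_V)J=\tfrac12\big[\kappa(r,r')\kappa(s,s')+\kappa(r,s')\kappa(s,r')\big]\cdot 2\cdot\tfrac12\cdot\ldots,
\]
and we must check the normalization. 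Expanding gives exactly $\kappa(r,r')\kappa(s,s')+\kappa(r,s')\kappa(s,r')=2k_2$. Hence $\bK_E=\tfrac12 J^\top(\bK_V\otimes\bK_V)J$, and the same identity holds for $\widetilde\bK_E$. Getting this constant right is the main bookkeeping obstacle, because the claimed factor $\tfrac12$ depends on it. Alternatively we can use the scaled columns $\tfrac12(e_r\otimes e_s+e_s\otimes e_r)$, which match $\Psi$ from Proposition~\ref{prop:tensor}, and track the norm of that matrix, $1/\sqrt2$.

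Given the identity, we telescope
\[
\bK_V\otimes\bK_V-\widetilde\bK_V\otimes\widetilde\bK_V=(\bK_V-\widetilde\bK_V)\otimes\bK_V+\widetilde\bK_V\otimes(\bK_V-\widetilde\bK_V).
\]
The spectral norm is multiplicative on Kronecker products, so this difference has norm at most $(\|\bK_V\|_2+\|\widetilde\bK_V\|_2)\|\bK_V-\widetilde\bK_V\|_2$. Compressing by $J$ does not increase the norm, and the factor $\tfrac12$ then yields the first inequality.

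A symmetrized version of the telescoping, $\tfrac12[(\Delta\otimes\bK_V+\widetilde\bK_V\otimes\Delta)+(\bK_V\otimes\Delta+\Delta\otimes\widetilde\bK_V)]$ with $\Delta=\bK_V-\widetilde\bK_V$, gives the same bound. We keep it in reserve in case the constant needs it.

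The estimator bound then follows immediately from Lemma~\ref{lem:lipschitz}, since $\bK_E$ and $\widetilde\bK_E$ are positive semidefinite by Proposition~\ref{prop:tensor}. We have $\|S(\bK_E)\widehat\bbeta-S(\widetilde\bK_E)\widehat\bbeta\|\le\|S(\bK_E)-S(\widetilde\bK_E)\|_2\|\widehat\bbeta\|$, and we insert the first bound.

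For the entrywise claim, we apply $ab-\tilde a\tilde b=(a-\tilde a)b+\tilde a(b-\tilde b)$ to each of the two products in \eqref{eq:lift}. For instance,
\[
|\kappa(x_1,y_1)\kappa(x_2,y_2)-\widetilde\kappa(x_1,y_1)\widetilde\kappa(x_2,y_2)|\le \epsilon M+M\epsilon=2M\epsilon,
\]
and the other matching is handled identically. Averaging the two terms with weight $\tfrac12$ keeps the bound at $2M\epsilon$, uniformly over $e,f$.
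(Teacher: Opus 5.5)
Your proposal is correct and follows the paper's proof essentially step for step. Your $J$ is $\sqrt2\,\bB$, where $\bB$ is the paper's matrix with columns $\tfrac12(e_{rs}+e_{sr})$, so $\bK_E=\tfrac12 J^\top(\bK_V\otimes\bK_V)J$ is the paper's identity $\bK_E=\bB^\top(\bK_V\otimes\bK_V)\bB$ with $\|\bB\|_2^2=\tfrac12$; the Kronecker telescoping, the appeal to Lemma~\ref{lem:lipschitz}, and the entrywise product bound then match the paper. The only cleanup needed is to replace your garbled display with the identity $J^\top(\bK_V\otimes\bK_V)J=2\bK_E$ that your next sentence correctly derives; the symmetrized telescoping kept ``in reserve'' is not needed.
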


\begin{proof}
Index the ordered pairs $(r,s)$ by the $R^2$ coordinates of $\R^{R^2}$ and define
\[
\mathcal K=\bK_V\otimes\bK_V,
\qquad
\widetilde{\mathcal K}
=
\widetilde{\bK}_V\otimes\widetilde{\bK}_V.
\]
For each loop-free unordered pair $e=\{r,s\}$ with $r<s$, let
\[
b_e=\frac12(e_{rs}+e_{sr}),
\]
and let $\bB$ collect these columns. Their supports are disjoint, so
\[
\bB^\top\bB=\frac12\bI_q
\quad\text{and}\quad
\|\bB\|_2^2=\frac12.
\]
A direct calculation gives
\[
\bK_E=\bB^\top\mathcal K\bB,
\qquad
\widetilde{\bK}_E
=
\bB^\top\widetilde{\mathcal K}\bB.
\]
Therefore
\[
\|\bK_E-\widetilde{\bK}_E\|_2
\le
\frac12
\|\mathcal K-\widetilde{\mathcal K}\|_2.
\]
Now use
\[
\bK_V\otimes\bK_V
-
\widetilde{\bK}_V\otimes\widetilde{\bK}_V
=
(\bK_V-\widetilde{\bK}_V)\otimes\bK_V
+
\widetilde{\bK}_V\otimes
(\bK_V-\widetilde{\bK}_V),
\]
together with
\[
\|A\otimes B\|_2=\|A\|_2\|B\|_2.
\]
This proves the pair-covariance operator bound. The estimator inequality follows from Lemma~\ref{lem:lipschitz}.

For the uniform entrywise bound,
\[
|\kappa_1\kappa_2-\widetilde\kappa_1\widetilde\kappa_2|
\le
|\kappa_1|\,|\kappa_2-\widetilde\kappa_2|
+
|\widetilde\kappa_2|\,|\kappa_1-\widetilde\kappa_1|
\le
2M\epsilon.
\]
The same bound holds for the second endpoint matching, and averaging preserves it.
\end{proof}

Theorem~\ref{thm:perturb} avoids the dimension inflation that would arise from converting a uniform entrywise error into a crude $q$-dependent matrix bound. Its more important feature is structural: it follows the same path as the scientific model. Perturbation begins at the node geometry, passes through the Kronecker representation and the loop-free symmetric restriction, and finally reaches the regularized estimator. This is the stability statement needed when the geometry is estimated rather than known.
\section{Beyond Pairs: Higher-Order Unordered Relations}
\label{sec:higher}

Pairs are the main focus because they describe ordinary undirected networks, but the same ordering problem appears when a relation contains more than two objects. Hyperedges, molecular complexes, and group interactions have no preferred labeling of their members. The higher-order result shows that the symmetrization principle extends naturally to these settings and reveals how intrinsic dimension grows with interaction order. The new loop-free and inferential results for pairs remain the main contribution.

Let
\[
e=\{x_1,\ldots,x_m\},
\qquad
f=\{y_1,\ldots,y_m\},
\]
and define
\[
k_m(e,f)
=
\frac1{m!}
\sum_{\pi\in S_m}
\prod_{j=1}^m
\kappa(x_j,y_{\pi(j)}).
\]

The same symmetrization principle now averages over all admissible matchings rather than the two matchings available for pairs.

\begin{theorem}[Covariance kernel for unordered $m$-way relations]
\label{thm:mway}
If $\kappa$ is positive semidefinite, define
\[
\Psi_m(e)
=
\frac{1}{m!}
\sum_{\sigma\in S_m}
\phi(x_{\sigma(1)})\otimes\cdots\otimes\phi(x_{\sigma(m)}).
\]
Then $\Psi_m(e)\in\operatorname{Sym}^m(\cH_\kappa)$, it is invariant to the
ordering used to represent $e$, and
\[
\langle\Psi_m(e),\Psi_m(f)\rangle
=
k_m(e,f)
\]
exactly. Consequently, $k_m$ is positive semidefinite on $\cX^m/S_m$. If a
finite base Gram matrix has rank $d$, then the induced Gram matrix has rank no
larger than
\[
\binom{d+m-1}{m},
\]
or the number of observed $m$-way relations, whichever is smaller.
\end{theorem}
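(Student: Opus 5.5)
The plan is to mirror the proof of Proposition~\ref{prop:tensor}, replacing the two endpoint matchings by the full symmetric group $S_m$.

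\emph{Membership and invariance.} Let $P_\sigma$ be the unitary operator on $\cH_\kappa^{\otimes m}$ that permutes tensor factors, $P_\sigma(a_1\otimes\cdots\otimes a_m)=a_{\sigma^{-1}(1)}\otimes\cdots\otimes a_{\sigma^{-1}(m)}$. Then
\[
\Psi_m(e)=\Pi_m\bigl(\phi(x_1)\otimes\cdots\otimes\phi(x_m)\bigr),
\qquad
\Pi_m=\frac1{m!}\sum_{\sigma\in S_m}P_\sigma .
\]
Because $\sigma\mapsto P_\sigma$ is a group homomorphism, $P_\tau\Pi_m=\Pi_m$ for every $\tau$. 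Two facts follow. First, $\Psi_m(e)$ lies in the fixed-point space $\operatorname{Sym}^m(\cH_\kappa)$. Second, relabelling $e$ by $\tau$ replaces the elementary tensor by its image under $P_\tau$, which leaves $\Psi_m(e)$ unchanged. Hence $\Psi_m$ is well defined on $\cX^m/S_m$.

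\emph{Inner product.} The operator $\Pi_m$ is self-adjoint, since $P_\sigma^*=P_{\sigma^{-1}}$, and idempotent, since $\Pi_m^2=\Pi_m$. So $\Pi_m$ is the orthogonal projection onto $\operatorname{Sym}^m(\cH_\kappa)$. Writing $a=\otimes_j\phi(x_j)$ and $b=\otimes_j\phi(y_j)$, this gives
\[
\langle\Pi_m a,\Pi_m b\rangle=\langle a,\Pi_m b\rangle=\frac1{m!}\sum_{\sigma}\langle a,P_\sigma b\rangle .
\]
Each term factorizes by the tensor inner-product rule quoted in Proposition~\ref{prop:tensor}: it equals $\prod_j\kappa(x_j,y_{\pi(j)})$ with $\pi=\sigma^{-1}$. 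As $\sigma$ ranges over $S_m$, so does $\pi$, and the sum is exactly $k_m(e,f)$.

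The step needing the most care is this combinatorial bookkeeping. Expanding both averages naively gives a double sum with $(m!)^2$ terms and prefactor $1/(m!)^2$. One must show that each matching $\pi$ arises exactly $m!$ times, which the projection identity does cleanly. For $m=2$ this reproduces the ``each matching appears twice with coefficient $1/4$'' computation. Positive semidefiniteness of $k_m$ then follows because it is a Gram kernel of a feature map: $\sum_{a,b} c_a c_b\, k_m(e_a,e_b)=\|\sum_a c_a\Psi_m(e_a)\|^2\ge0$.

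\emph{Rank bound.} As in Proposition~\ref{prop:rank}, a rank-$d$ base Gram matrix on the finite object set factors as $\bK_V=FF^\top$ with $F\in\R^{R\times d}$. We may therefore take $\phi(x)\in\R^d$ on the observed objects, realizing the inner products restricted to those objects. The identity above then holds with $\cH_\kappa$ replaced by $\R^d$. All features lie in $\operatorname{Sym}^m(\R^d)$, which has dimension $\binom{d+m-1}{m}$. This count comes from the basis of symmetrized monomials $e_{i_1}\cdots e_{i_m}$ with $i_1\le\cdots\le i_m$, that is, multisets of size $m$ drawn from $d$ symbols. The rank of a Gram matrix is at most the dimension of the span of its features and at most the number of observed relations, which gives the minimum.
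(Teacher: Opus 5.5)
Your proof is correct and follows the paper's argument. The paper expands the $(m!)^2$-term double sum and counts that each relative permutation $\tau\circ\sigma^{-1}$ occurs exactly $m!$ times, whereas you get the same collapse from the projection identity $\Pi_m^*\Pi_m=\Pi_m$; your positive-semidefiniteness and rank steps are the same as the paper's, and one cosmetic fix applies: invariance under relabelling uses right invariance $\Pi_m P_\tau=\Pi_m$ rather than the left invariance $P_\tau\Pi_m=\Pi_m$ you state, though it follows by the same reindexing of the group sum.
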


\begin{proof}
The feature $\Psi_m(e)$ is unchanged by permuting the elements of $e$, so it is
well defined on the unordered $m$-way relation. Its inner product is
\[
\langle\Psi_m(e),\Psi_m(f)\rangle
=
\frac{1}{(m!)^2}
\sum_{\sigma,\tau\in S_m}
\prod_{j=1}^m
\kappa(x_{\sigma(j)},y_{\tau(j)}).
\]
For each fixed relative permutation $\pi=\tau\circ\sigma^{-1}$, there are exactly
$m!$ pairs $(\sigma,\tau)$ producing the same product after reindexing. Therefore
\[
\langle\Psi_m(e),\Psi_m(f)\rangle
=
\frac1{m!}
\sum_{\pi\in S_m}
\prod_{j=1}^m
\kappa(x_j,y_{\pi(j)})
=
k_m(e,f).
\]
Thus $k_m$ is an inner-product kernel and is positive semidefinite. If the base
Gram matrix has rank $d$, the node features may be represented in $\R^d$.
All induced features then lie in $\operatorname{Sym}^m(\R^d)$, whose dimension is
$\binom{d+m-1}{m}$. The Gram-matrix rank cannot exceed this dimension or the
number of observed $m$-way relations.
\end{proof}

This extension shows that the idea is not tied to adjacency matrices. Object-level geometry can be lifted to an unordered relation of any fixed size by averaging over the admissible matchings.

\section{Simulation Study}
\label{sec:simulation}

The simulations show how the relational covariance behaves when the geometry and signal are controlled. We begin with dimension and spectral structure, then examine estimation under alignment and misspecification. Separate experiments study truncation and perturbation of the node geometry. This makes the role of each theoretical result visible before we turn to the ABIDE data.

\subsection{Simulation design}

For the spectral experiments, we generated node Gram matrices with ranks $d\in\{4,8,12\}$ and network sizes $R\in\{12,20,30\}$. We considered exponential decay, power-law decay, and a polynomial taper. Each matrix was normalized to unit diagonal and lifted to the loop-free pair domain using \eqref{eq:lift}.

The risk experiment used $R=25$ and node rank $d=8$, with
\[
\widehat{\bbeta}=\bbeta+\beps,
\qquad
\beps\sim N_q(0,\bI_q).
\]
We began with effects generated from the assumed relational covariance at $\tau^2\in\{0.25,1,4\}$. We then kept the effect norm fixed while moving its energy from the leading covariance direction toward the weakest supported direction. A deliberately adverse case placed the enlarged signal entirely in that weak direction and used $\tau^2=0.05$. This design makes the distinction between model-based risk improvement and performance under misspecification visible.

For truncation, we used $R=30$, $d=10$, and $\tau^2=v=1$. Increasing ranks were retained and the truncated estimator was compared with its full-rank counterpart and with the exact result in Theorem~\ref{thm:exacttrunc}. For perturbation, an $R=25$, $d=8$ node geometry was altered at levels from 0.01 to 0.20. We then compared the induced pair-covariance error with Theorem~\ref{thm:perturb}. Monte Carlo replication counts and the random seed are specified in the accompanying simulation code so that the numerical study can be reproduced exactly.

\subsection{Intrinsic dimension and loop-free spectral structure}

The first experiment shows clearly why relational covariance can remain manageable even when the number of possible edges grows rapidly. Whenever the pair domain was large enough, the observed loop-free rank attained the symmetric-tensor bound
\[
\binom{d+1}{2}.
\]
Thus, node ranks $d=4$, $8$, and $12$ produced edge ranks $10$, $36$, and $78$, respectively, except when the nominal pair dimension itself was smaller. At the same time, the proportion of the nominal edge space required by the covariance decreased sharply as $R$ increased. For example, with $d=8$ the rank ratio fell from $36/66\approx0.55$ at $R=12$ to $36/435\approx0.08$ at $R=30$. Figure~\ref{fig:rank} summarizes this pattern by plotting relational covariance rank relative to the nominal number of loop-free pairs. The decreasing ratios show visually why the rank result matters: as the network grows while the node geometry remains low dimensional, the covariance occupies a progressively smaller fraction of the available edge space.

\begin{figure}[htbp]
\centering
\includegraphics[width=0.72\textwidth]{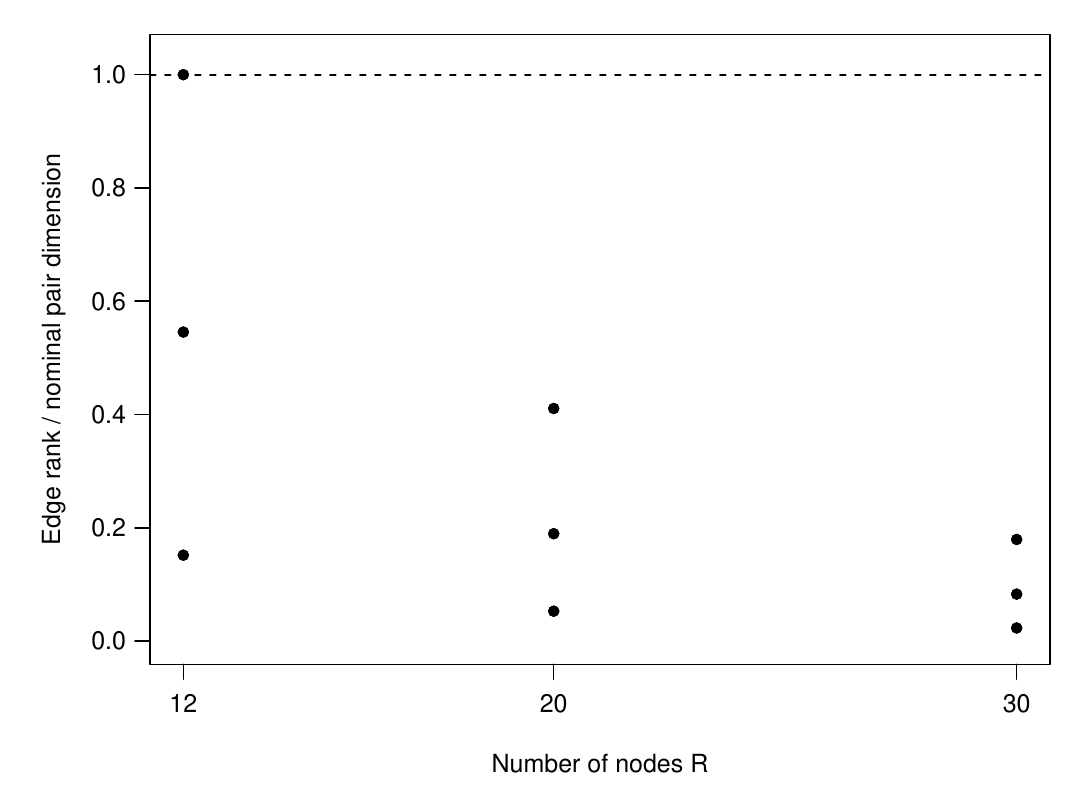}
\caption{Relational covariance rank relative to the nominal loop-free pair dimension. The same node-level rank generates an increasingly compressed relational covariance as the number of nodes grows.}
\label{fig:rank}
\end{figure}

The spectral calculations matched the theory to numerical precision. Across all 27 combinations of network size, target rank, and eigenvalue-decay pattern, the maximum discrepancy between the numerically computed full symmetric-product spectrum and the theoretical product spectrum was only
\[
1.14\times10^{-12}.
\]
The largest numerical violation of the loop-free interlacing inequalities was
\[
3.18\times10^{-14},
\]
which is at the level of floating-point precision. The simulations therefore recover both the established full-product spectrum and the new loop-free spectral constraints essentially exactly.

Removing self-pairs did reduce total covariance trace, but the relative loss became smaller as the network grew. Across the simulated settings, the observed trace fraction removed ranged from approximately 9.6\% to 25.5\%, and every value remained below the normalized theoretical bound $4/(R+1)$. This is useful in practice: the loop-free restriction changes the spectrum, but it does not erase the low-dimensional structure inherited from the node geometry.

\subsection{Risk gains depend on geometric alignment}

The second experiment provides the most direct statistical interpretation of the relational covariance. When the data-generating effect followed the assumed covariance, the structured estimator produced very large reductions in risk. The exact structured-to-unstructured risk ratios were 0.053, 0.085, and 0.107 for $\tau^2=0.25$, $1$, and $4$, respectively. In other words, in these settings the structured estimator used only about 5--11\% of the risk of the unstructured estimator. Monte Carlo estimates agreed almost exactly with the theoretical expressions.

The more revealing result came from moving signal away from the directions favored by the covariance. Figure~\ref{fig:misspec} shows a smooth deterioration in relative performance as signal energy was transferred from the leading eigendirection to a weak direction. The figure therefore visualizes the bias--variance mechanism in Theorem~\ref{thm:risk}: the benefit is greatest when the effect lies in directions that the covariance regards as plausible. For $\tau^2=1$, the exact risk ratio increased from 0.07 under full alignment to 0.39 at 50\% misalignment and 0.71 when the signal lay entirely in the weak direction.

\begin{figure}[htbp]
\centering
\includegraphics[width=0.72\textwidth]{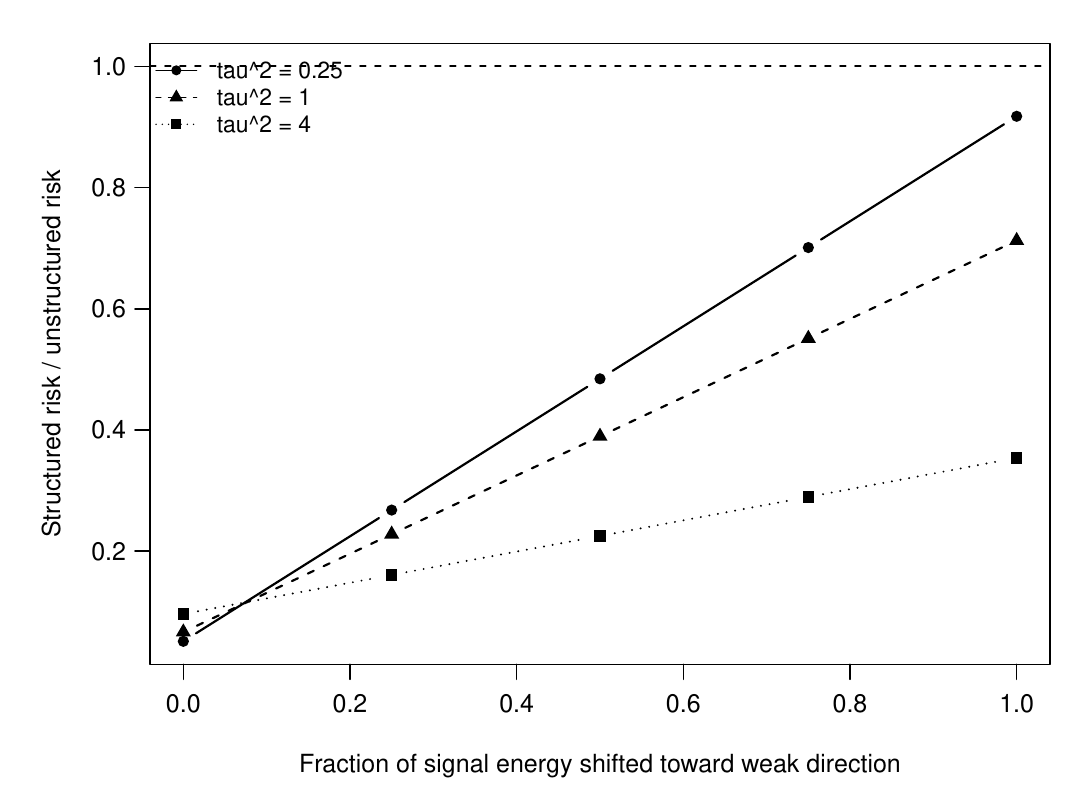}
\caption{Effect of geometric misspecification on conditional risk. Structured borrowing is highly effective when the signal aligns with high-eigenvalue relational directions, but its advantage diminishes as signal energy moves toward weakly supported directions.}
\label{fig:misspec}
\end{figure}

The severe misspecification scenario is important because it shows the other side of structured regularization. When the entire enlarged signal was placed in the weakest supported direction and shrinkage was made deliberately strong, the exact structured-to-unstructured risk ratio rose to
\[
8.79.
\]
Thus the structured estimator was substantially worse than the unstructured estimator in this intentionally adverse setting. Figure~\ref{fig:riskcases} places that adverse result next to the aligned and moderately misspecified cases. The comparison is included to prevent a misleading reading of the Bayes-risk result: structured borrowing can be highly effective, but severe geometric misspecification can reverse the advantage.

\begin{figure}[htbp]
\centering
\includegraphics[width=0.72\textwidth]{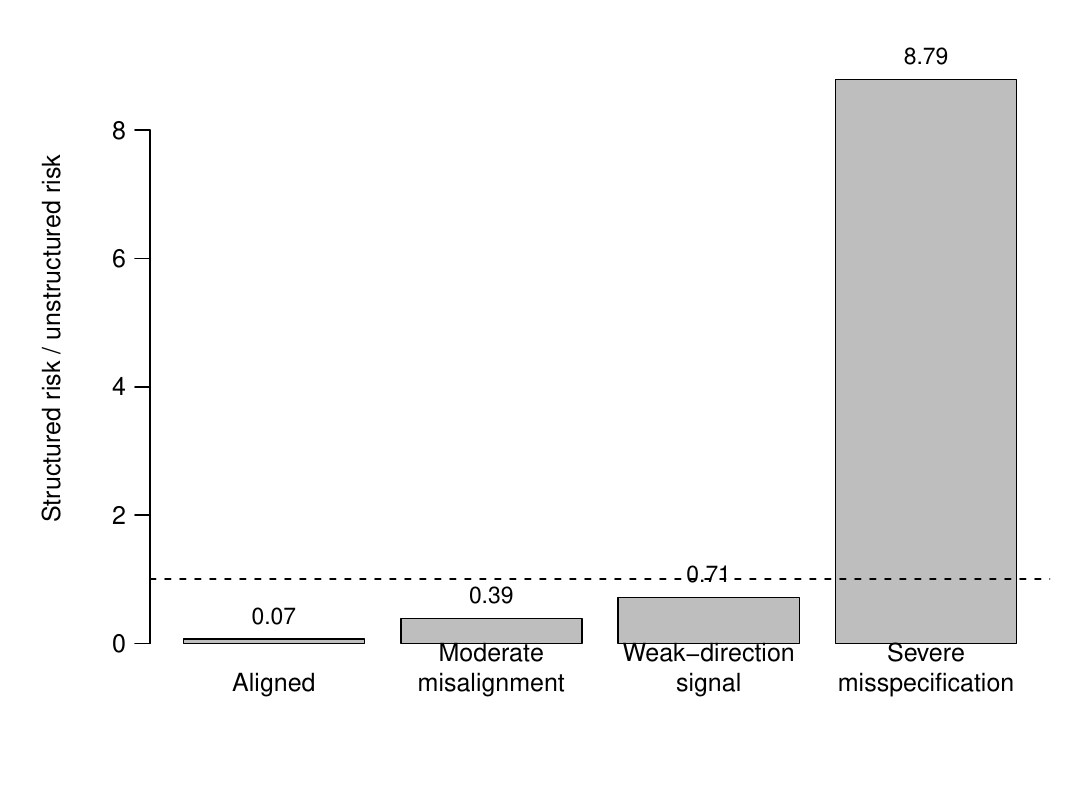}
\caption{Selected risk scenarios. The dashed horizontal line marks equal risk. The first three scenarios illustrate the gradual erosion of the benefit of relational regularization; the final scenario deliberately combines severe geometric misalignment with strong shrinkage and produces clear risk inflation.}
\label{fig:riskcases}
\end{figure}

This pattern is important because it separates model-based benefit from unconditional dominance. The relational covariance is not acting as a generic smoother that is always beneficial. It is translating a scientific geometry into a preferred set of relational directions. When that geometry is informative, the gains can be substantial. When it is badly wrong, the risk calculation reveals the cost. That is precisely the behavior a useful structured method should make transparent.

\subsection{Low-rank approximation preserves inference progressively}

The third experiment examines whether the covariance can be compressed without losing the inferential behavior that motivated it. The discrepancy decreased steadily as additional covariance directions were retained. As retained rank increased, covariance trace increased smoothly and estimator discrepancy declined monotonically. At rank 39, approximately 97.1\% of covariance trace was retained and the mean estimator discrepancy was 1.56. At rank 46, the retained trace rose to 99.1\% and the discrepancy fell to 0.70. By rank 54, more than 99.96\% of trace was retained and mean discrepancy was only 0.07.

Figure~\ref{fig:trunc} compares the observed estimator discrepancy with the theorem-based bound. Its purpose is not only to verify the inequality, but to show how an inferential tolerance can guide rank selection more directly than a trace threshold. The bound is conservative, as expected for an operator-norm guarantee, but it decreases in the correct direction and remains valid throughout. The largest observed ratio of actual error to the bound was below one.

\begin{figure}[htbp]
\centering
\includegraphics[width=0.72\textwidth]{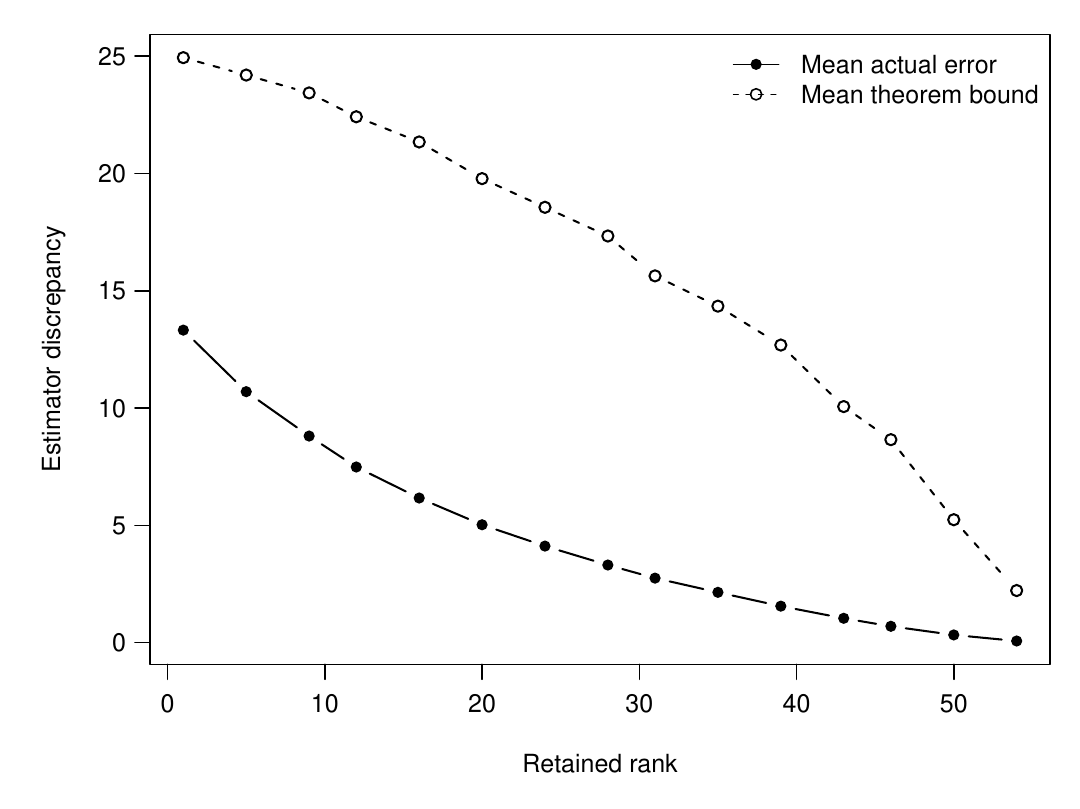}
\caption{Inferential error under spectral truncation. Both the observed discrepancy between the full and truncated estimators and the theoretical upper bound decrease as more relational eigencomponents are retained.}
\label{fig:trunc}
\end{figure}

This result is useful computationally because it gives a more meaningful criterion than choosing rank solely by percentage of variance explained. The approximation can instead be evaluated on the scale of the estimator itself.

\subsection{Node-geometry error propagates smoothly to the pair covariance}

The final experiment evaluates the new operator perturbation result. As the node features were increasingly perturbed, the induced pair-covariance error increased smoothly and remained below the theoretical bound at every perturbation level. Mean pair-covariance operator error rose from 0.224 at perturbation level 0.01 to 4.396 at level 0.20, while the corresponding theorem bounds were 0.330 and 6.575. Figure~\ref{fig:perturb} shows that the bound tracks the scale of the observed perturbation rather than merely providing a vacuous worst-case inequality. This is the numerical counterpart of the scientific stability question: moderate uncertainty in node representation produces a controlled, predictable change in the induced relational covariance.

\begin{figure}[htbp]
\centering
\includegraphics[width=0.72\textwidth]{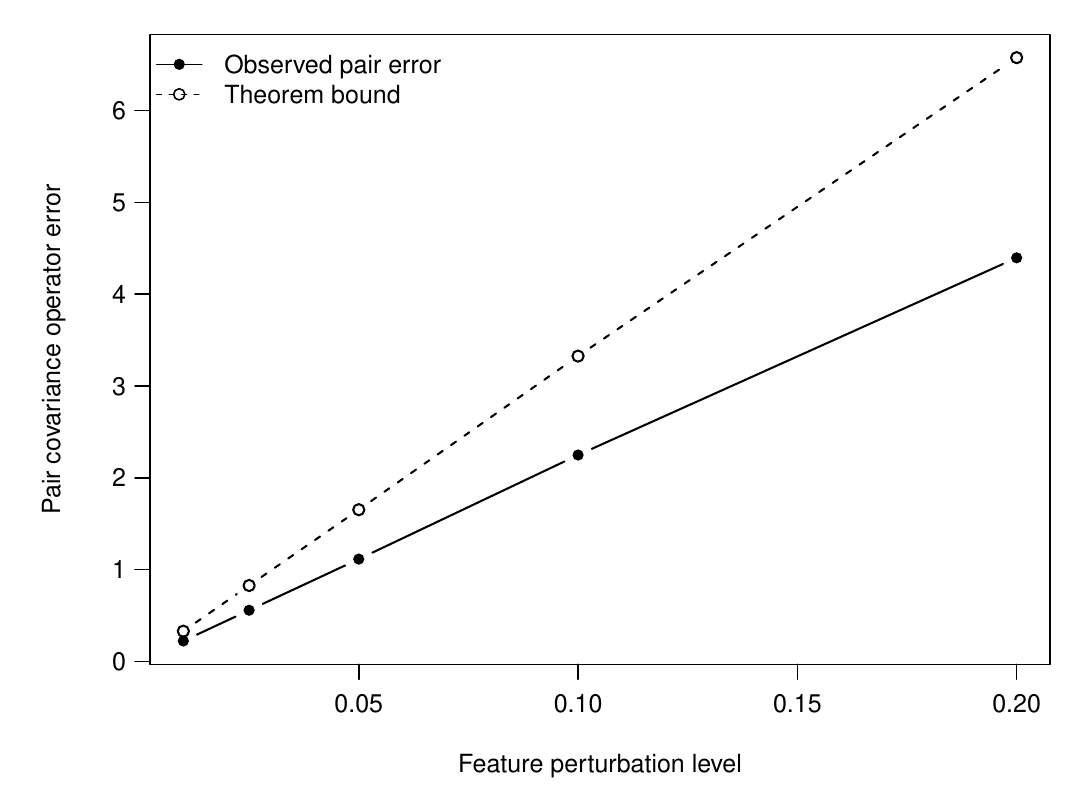}
\caption{Propagation of node-geometry perturbation to the loop-free pair covariance. The observed operator error grows smoothly with perturbation level and remains below the bound from Theorem~\ref{thm:perturb}.}
\label{fig:perturb}
\end{figure}

The estimator-level bound was more conservative than the covariance-level bound, which is unsurprising because it compounds a worst-case operator inequality with the norm of the observed vector. The covariance-level result is therefore the more informative practical diagnostic. Even so, the full chain from node geometry to pair covariance to estimator remained stable throughout the experiment.

Taken together, the simulations serve two purposes. They verify that the finite loop-free calculations behave as the theory predicts, and they show why the inferential qualifications matter. Low-dimensional geometry can induce strong compression and substantial risk reduction, but neither is automatically sufficient: misspecified geometry can increase risk, and a high percentage of retained trace can still leave a non-negligible estimator discrepancy. The perturbation experiment completes this picture by showing that uncertainty in the node representation propagates in a controlled, quantitatively bounded manner.

\FloatBarrier

\section{Application: Relational Covariance in Autism Neuroimaging}
\label{sec:application}

\subsection{Scientific question and data}

Autism provides a compelling test of relational covariance because the scientific question concerns how brain regions work together. Brain function emerges from coordinated activity across distributed systems, and autism is biologically heterogeneous. Meaningful alterations may therefore appear as modest changes across many connections rather than as one dominant abnormality. Statistically, this is difficult: the connectome is high dimensional, individual edge signals can be weak, and global summaries can hide the local relationships that carry biological information.

The Autism Brain Imaging Data Exchange (ABIDE I) provides a useful setting for examining this problem \citep{DiMartino2014}. ABIDE combines resting-state functional magnetic resonance imaging with phenotypic information collected across multiple imaging sites. In a resting-state scan, participants are not asked to perform a specific cognitive task. Instead, the scanner records spontaneous fluctuations in the blood-oxygen-level-dependent signal while the participant is at rest. Regions whose signals fluctuate together are interpreted as exhibiting functional connectivity. This provides a way to study the organization of large-scale brain systems without restricting the analysis to a single task or stimulus.

For the present application, the processed cohort contains 792 participants from 20 imaging sites. Brain activity is represented using the 116-region Automated Anatomical Labeling atlas \citep{TzourioMazoyer2002}. A regional time series is obtained for each atlas region, and each participant is represented by pairwise functional connectivity across the 116 regions. Because the network is undirected, the connection between regions $r$ and $s$ is the same as the connection between $s$ and $r$. Excluding self-connections gives
\[
q=\binom{116}{2}=6{,}670
\]
unique functional connections for each participant.

This representation illustrates the central methodological issue in the paper. The observed outcome is indexed by \emph{relationships} between brain regions, while much of the biological information that can guide inference is available for the regions themselves. Anatomical location is defined at the region level. Population-level functional organization is also naturally summarized through region-level coordinates or embeddings. The statistical challenge is therefore to use that information to determine which of the 6,670 connections should be regarded as related, without treating the connections as independent and without collapsing them into a small number of global network summaries.

The role of the proposed covariance is to provide that bridge. Connections involving anatomically or functionally similar regions can borrow information from one another because their endpoints are biologically related. At the same time, inference remains defined on individual region pairs. This is particularly relevant in autism, where distributed alterations may span several systems and where a model that borrows information only when supported by brain organization may be more informative than either independent edgewise analysis or aggressive dimension reduction.

The goal of the application is not to test an ASD group contrast. It is to ask whether the new theory changes how a real connectome should be represented and regularized. We examine the effective covariance dimension of the 6,670-edge domain and the effect of removing self-pairs. We then study whether spectral compression preserves inference and how uncertainty in the brain geometry propagates to the pair covariance.

\subsection{Biological organization at the region level}

Two complementary forms of region-level organization are used. The first is anatomical. Let $\bm a_r$ denote the three-dimensional centroid of AAL region $r$. Anatomical similarity is defined by
\begin{equation}
\label{eq:abide_anat}
\kappa_A(r,u)
=
\exp\left\{
-\frac{\|\bm a_r-\bm a_u\|^2}{2\rho_A^2}
\right\},
\qquad
\rho_A=31.0.
\end{equation}
This representation reflects physical organization of the brain. Regions that occupy similar anatomical locations are assigned greater similarity, and connections formed by anatomically similar endpoints can therefore receive stronger covariance.

The second representation captures functional organization that is not reducible to physical distance. A diagnosis-blind population functional graph is used to obtain a 12-dimensional spectral embedding $\bm g_r$ for each region, using a normalized graph operator and standard spectral graph principles \citep{vonLuxburg2007}. Functional similarity is then defined by
\begin{equation}
\label{eq:abide_func}
\kappa_G(r,u)
=
\exp\left\{
-\frac{\|\bm g_r-\bm g_u\|^2}{2\rho_G^2}
\right\},
\qquad
\rho_G=0.50.
\end{equation}
Because this representation is diagnosis-blind, it describes population-level functional organization rather than using the ASD contrast itself to define the covariance.

Each region-level similarity is lifted to the unordered-connection domain through \eqref{eq:lift}. We also use a combined representation that requires support from both anatomical and functional organization. These geometries offer different biological notions of which connections should be allowed to share information.

Figure~\ref{fig:abidebrain} shows the 116 AAL region centroids. The figure makes the change of domain explicit. The points provide the biological information used to construct one component of the covariance, but the statistical domain consists of the 6,670 connections formed from them.

\begin{figure}[htbp]
\centering
\includegraphics[width=0.94\textwidth]{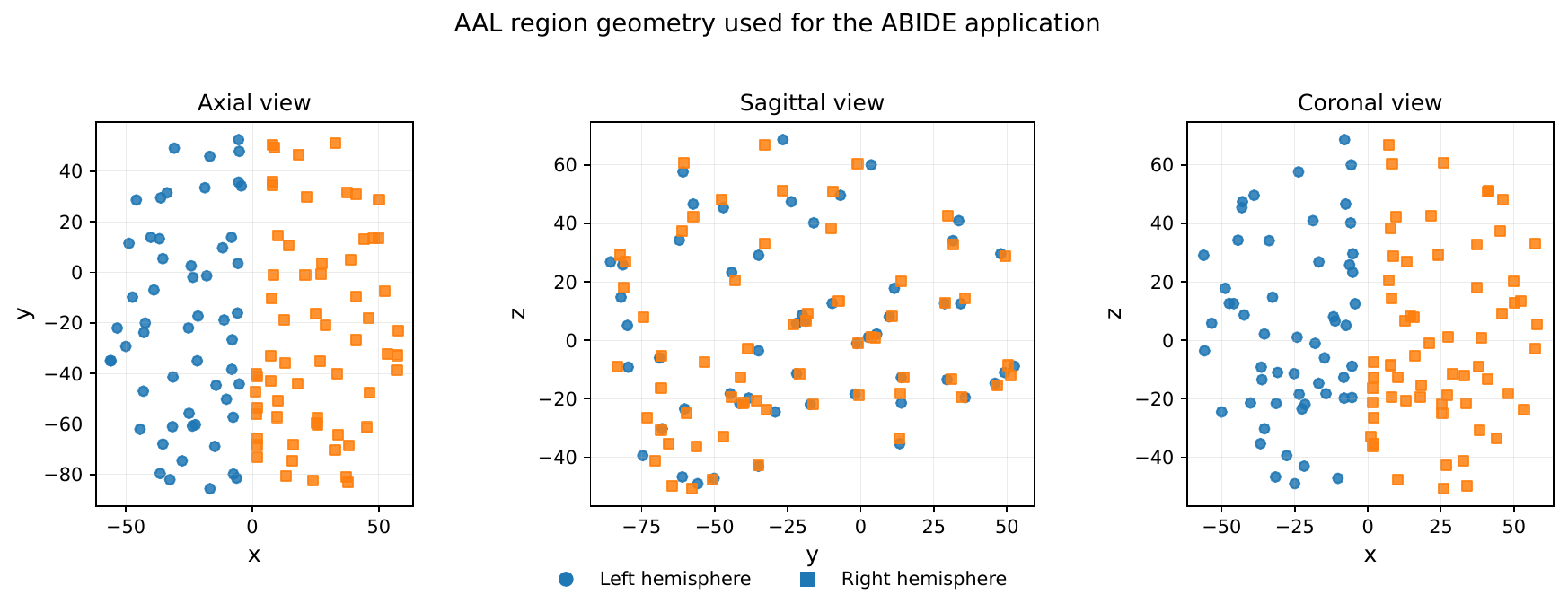}
\caption{AAL region centroids defining the anatomical geometry in axial, sagittal, and coronal projections. The application is indexed by the 6,670 unordered region pairs formed from these 116 regions rather than by the regions individually.}
\label{fig:abidebrain}
\end{figure}

\subsection{A high-dimensional connectome with low-dimensional relational structure}

We begin by asking whether the nominal size of the connectome reflects its actual covariance complexity. If the 6,670 connections behaved as unrelated coordinates, one would expect the effective dimension to remain close to the edge count. If instead the connectome is organized by a smaller number of biologically meaningful relational patterns, the induced covariance should concentrate in far fewer directions.

Table~\ref{tab:abidespectral} shows strong evidence of the latter. The numerical ranks of the anatomical, functional, and combined relational representations are 741, 136, and 256. These values are already much smaller than 6,670. More importantly, the covariance mass is concentrated even more strongly than the numerical ranks suggest.

The functional geometry is especially concentrated. Only 14 directions explain 90\% of covariance trace and 51 directions explain 95\%. Thus 95\% of the functional relational covariance is represented by only
\[
\frac{51}{6670}\times100=0.76\%
\]
of the nominal edge coordinates. The anatomical geometry is less concentrated but remains strongly structured: 360 directions explain 95\% of trace, corresponding to 5.40\% of the edge domain. The combined representation reaches 95\% with 195 directions, or 2.92\% of the full relation space.

\begin{table}[htbp]
\centering
\caption{Spectral complexity of the observed ABIDE relational geometries. Percentages in parentheses are fractions of the nominal 6,670-dimensional loop-free edge domain.}
\label{tab:abidespectral}
\small
\begin{tabular}{lrrrrrr}
\toprule
Geometry & Numerical & Effective & Entropy & 90\% trace & 95\% trace & 99\% trace\\
& rank & rank & rank & rank & rank & rank\\
\midrule
Anatomical & 741 & 17.86 & 205.99 & 255 (3.82\%) & 360 (5.40\%) & 556 (8.34\%)\\
Functional & 136 & 1.69 & 8.35 & 14 (0.21\%) & 51 (0.76\%) & 88 (1.32\%)\\
Combined & 256 & 23.72 & 155.42 & 158 (2.37\%) & 195 (2.92\%) & 240 (3.60\%)\\
\bottomrule
\end{tabular}
\end{table}

Biologically, this means that thousands of functional connections are not being organized as thousands of unrelated quantities. The anatomical and functional descriptions of the brain induce a much smaller set of dominant relational patterns. That structure is important in autism research because a distributed signal that is weak at individual edges can still be coherent when viewed across related connections. The covariance provides a mechanism for recognizing that coherence without abandoning connection-level interpretation.

Figure~\ref{fig:abidespectrum} shows the different spectral signatures of the three biological geometries. The functional covariance concentrates very rapidly in its leading directions. Anatomical organization is more diffuse. The combined representation lies between the two. These differences show that the choice of biological geometry is not simply a computational device; it changes the directions along which the model expects connections to vary together.

\begin{figure}[htbp]
\centering
\includegraphics[width=0.92\textwidth]{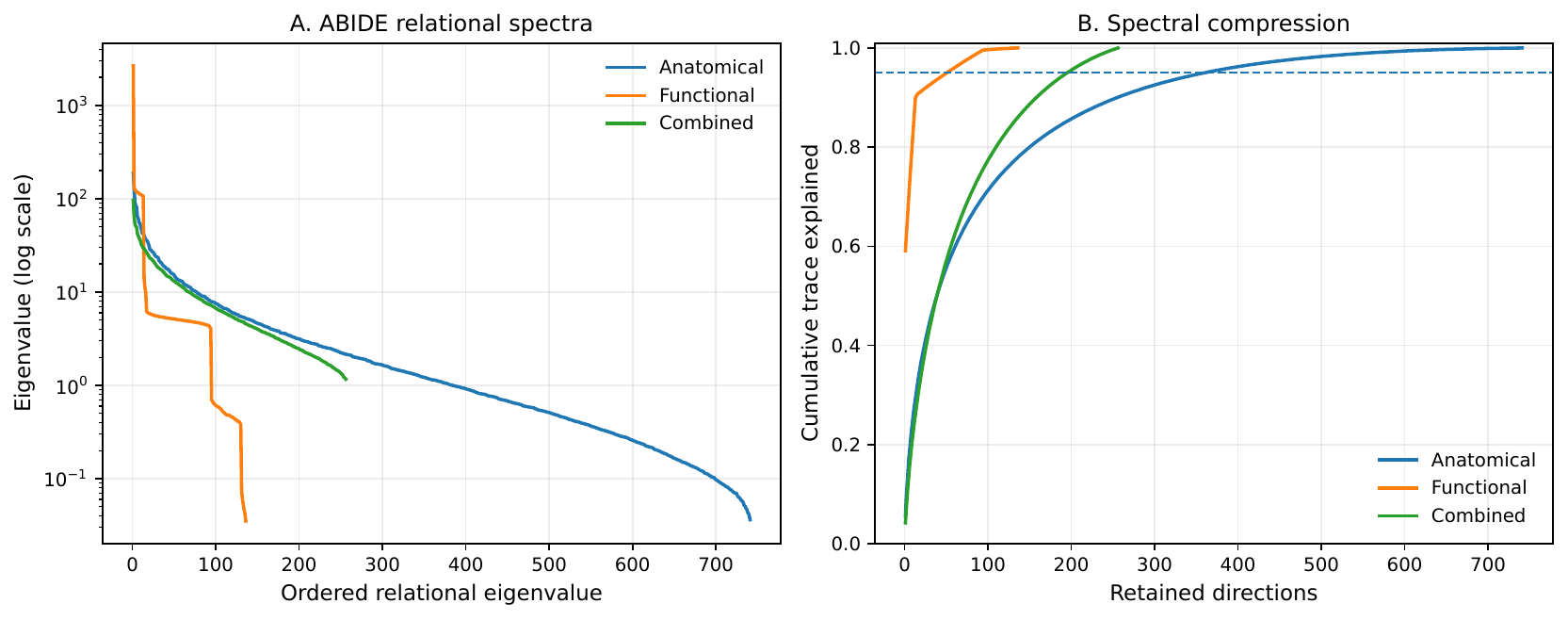}
\caption{Observed ABIDE relational spectra. Left: ordered positive eigenvalues on a logarithmic scale. Right: cumulative covariance trace explained by retained relational directions; the horizontal line marks 95\%.}
\label{fig:abidespectrum}
\end{figure}

The same idea can be visualized directly on the brain. Figure~\ref{fig:abidemodes} displays the 30 largest-magnitude connection loadings in the first eigenvector of each relational covariance. The anatomical mode emphasizes a more spatially concentrated pattern, while the functional mode connects a broader set of region pairs. The combined mode reflects a distinct organization informed by both sources. These are covariance modes, not ASD effect maps. They show the connection patterns along which the model is most willing to share information.

\begin{figure}[htbp]
\centering
\includegraphics[width=0.94\textwidth]{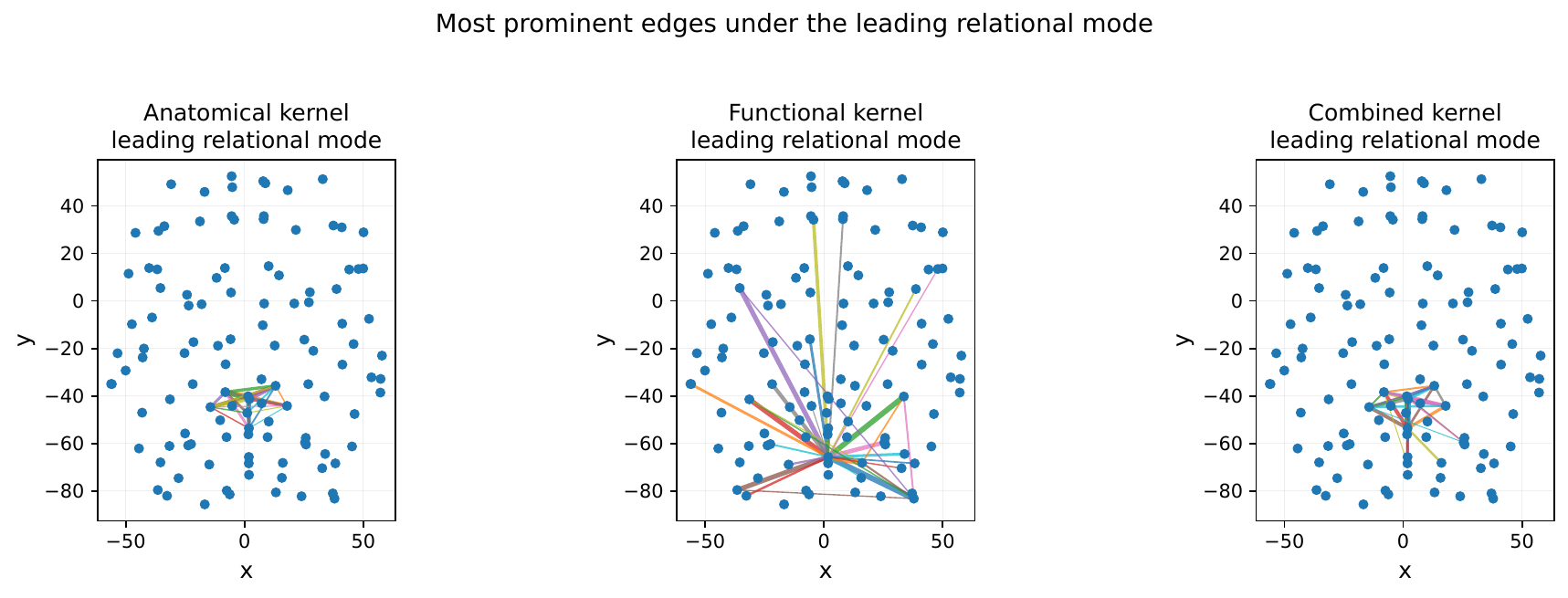}
\caption{Leading relational modes displayed on the observed AAL brain geometry. Each panel shows the 30 edge coordinates with largest absolute loading in the first eigenvector of the anatomical, functional, or combined relational kernel. The figure visualizes covariance geometry rather than an ASD contrast.}
\label{fig:abidemodes}
\end{figure}

\subsection{Why removing self-pairs matters}

Functional connectivity is defined between distinct brain regions. Region-to-itself connections are therefore excluded from the observed connectome. The clean symmetric-product theory, however, is naturally expressed on the larger domain that includes those self-pairs. For the present atlas this full domain contains
\[
q_+=\binom{117}{2}=6{,}786
\]
unordered pairs. The observed network removes exactly 116 coordinates.

Theorem~\ref{thm:loopfree} allows us to quantify what this biologically necessary restriction does to the covariance. Table~\ref{tab:loopfreeapp} shows that, for the anatomical geometry, the full symmetric-pair trace is 7,224.71. Removing the 116 self-pairs reduces the trace by 116, or 1.606\%. The generic normalized bound is
\[
\frac{4}{R+1}
=
\frac{4}{117}
=
0.03419,
\]
which corresponds to 3.419\%. The observed loss is therefore well below the theoretical limit. Among the first 150 loop-free eigenvalues examined, no interlacing violation was observed to reported numerical precision.

\begin{table}[htbp]
\centering
\caption{Observed effect of the loop-free restriction for the anatomical ABIDE geometry.}
\label{tab:loopfreeapp}
\small
\begin{tabular}{lr}
\toprule
Quantity & Observed value\\
\midrule
Full symmetric-pair dimension & 6,786\\
Loop-free dimension & 6,670\\
Self-pairs removed & 116\\
Full symmetric-pair trace & 7,224.71\\
Trace removed & 116.00\\
Trace-loss fraction & 1.606\%\\
Generic bound $4/(R+1)$ & 3.419\%\\
Leading eigenvalues checked & 150\\
Maximum interlacing violation & $0.000\times10^{0}$\\
\bottomrule
\end{tabular}
\end{table}

Figure~\ref{fig:loopfreeapp} shows the same result spectrally. The loop-free eigenvalues remain between the upper full-product spectrum and the lower interlacing envelope. The result matters because it justifies using the full symmetric product as a mathematical reference while still respecting the biological domain actually analyzed.

\begin{figure}[htbp]
\centering
\includegraphics[width=0.72\textwidth]{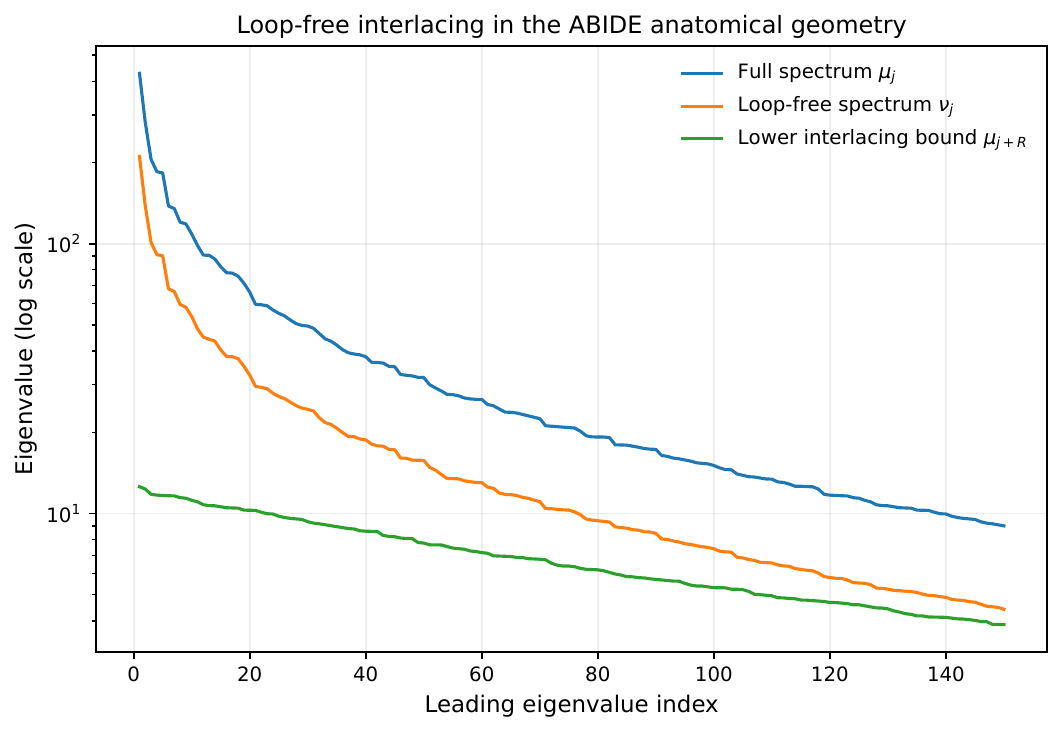}
\caption{Loop-free spectral interlacing for the observed anatomical ABIDE geometry. The loop-free eigenvalues $\nu_j$ lie between the full symmetric-product eigenvalues $\mu_j$ and the lower interlacing bound $\mu_{j+R}$.}
\label{fig:loopfreeapp}
\end{figure}

\subsection{When spectral compression is sufficient for inference}

The strong reduction in effective dimension suggests an obvious computational strategy: retain only the dominant relational directions. For a connectome of this size, that can substantially reduce storage and computation. The scientific question, however, is not only how much covariance is preserved. It is whether the resulting estimator remains close enough to the one obtained from the full covariance.

Theorem~\ref{thm:exacttrunc} makes that distinction measurable. Using the canonical scaling $\tau^2=v=1$, a 95\%-trace approximation does not necessarily produce a small change in the shrinkage operator. For the anatomical geometry, retaining 360 directions leaves an exact operator discrepancy of 0.533. The functional rank-51 approximation leaves a discrepancy of 0.836, and the combined rank-195 approximation leaves a discrepancy of 0.718.

More directions are needed when the goal is to preserve inferential behavior. An operator error below 0.10 requires 690 anatomical directions and 130 functional directions. For the combined representation, all 256 available directions are required. Table~\ref{tab:truncapp} summarizes this difference.

\begin{table}[htbp]
\centering
\caption{Spectral truncation versus inferential approximation under the canonical $\tau^2=v=1$ scaling. The operator errors are exact consequences of Theorem~\ref{thm:exacttrunc}.}
\label{tab:truncapp}
\small
\begin{tabular}{lrrrrrr}
\toprule
Geometry & 95\% trace & Error at & Rank for & Rank for & Rank for\\
& rank & 95\% rank & error $<0.10$ & error $<0.05$ & error $<0.01$\\
\midrule
Anatomical & 360 & 0.533 & 690 & 733 & 741\\
Functional & 51 & 0.836 & 130 & 132 & 136\\
Combined & 195 & 0.718 & 256 & 256 & 256\\
\bottomrule
\end{tabular}
\end{table}

Figure~\ref{fig:truncapp} shows the exact discrepancy across all retained ranks. The horizontal lines provide direct inferential tolerances rather than a variance-explained threshold.

\begin{figure}[htbp]
\centering
\includegraphics[width=0.72\textwidth]{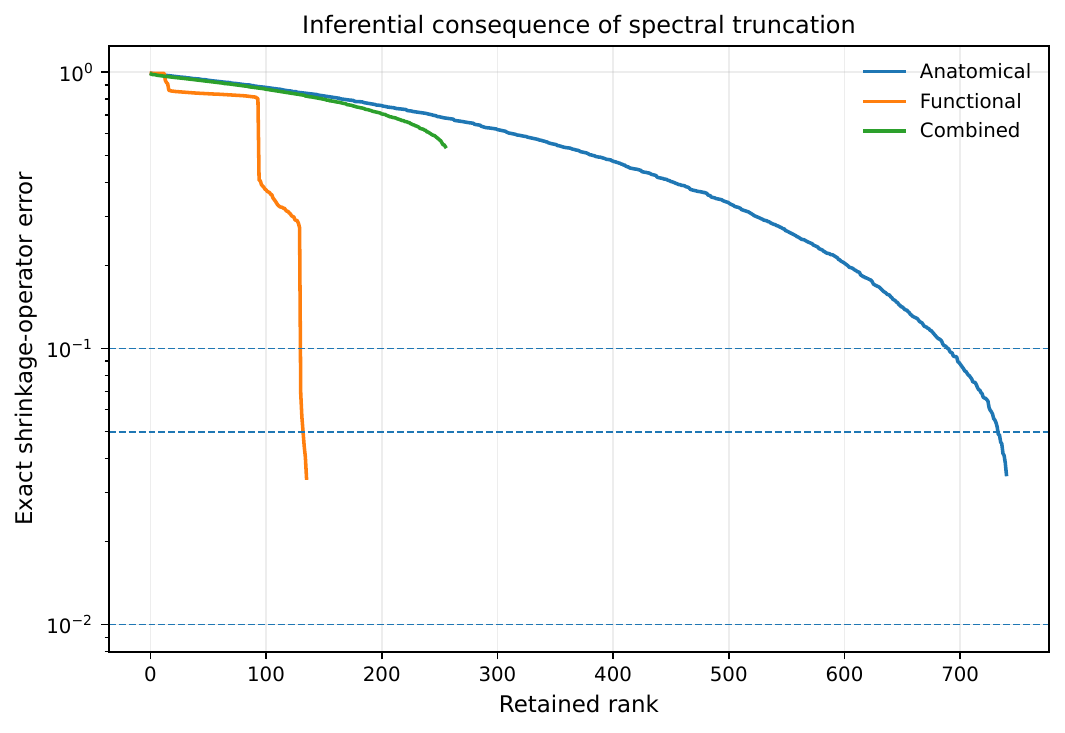}
\caption{Exact inferential consequence of spectral truncation in the ABIDE relational geometries for $\tau^2=v=1$. Horizontal lines mark operator errors of 0.10, 0.05, and 0.01. High cumulative trace alone does not guarantee a small change in the shrinkage operator.}
\label{fig:truncapp}
\end{figure}

This distinction matters in a disease study because the covariance determines how strongly connection-level effects are regularized. An approximation can reproduce most of the covariance variation and still alter the inferential map appreciably. The proposed result therefore connects computational reduction to the quantity that ultimately matters for analysis: the behavior of the estimator.

\subsection{Robustness to uncertainty in anatomical representation}

The biological geometry used to organize the connectome is not known without error. Atlas coordinates are representations of regions rather than exact biological locations, and data-derived embeddings are subject to sampling variability. It is therefore important to know whether a modest perturbation at the region level can produce an unstable change across thousands of pair coordinates.

We examine this question using the observed AAL centroids. Independent Gaussian perturbations are added to each anatomical coordinate with standard deviations 0.5, 1, 2, and 5 atlas-coordinate units. At each level, 100 perturbed node geometries are generated and lifted to the loop-free pair domain.

Table~\ref{tab:perturbapp} reports the resulting operator errors. Mean pair-covariance error increases from 3.88 at perturbation SD 0.5 to 38.06 at SD 5. The corresponding theorem bound increases from 5.86 to 57.06. The observed error remains a stable fraction of the bound: the mean actual-to-bound ratio lies between 0.655 and 0.667 across all perturbation levels.

\begin{table}[htbp]
\centering
\caption{Propagation of perturbations in the observed anatomical node geometry to the loop-free pair covariance. Results are based on 100 perturbations at each level.}
\label{tab:perturbapp}
\small
\begin{tabular}{rrrrr}
\toprule
Centroid & Mean node & Mean pair & Mean theorem & Mean actual /\\
SD & error & error & bound & bound\\
\midrule
0.5 & 0.283 & 3.880 & 5.857 & 0.662\\
1.0 & 0.559 & 7.587 & 11.565 & 0.655\\
2.0 & 1.129 & 15.474 & 23.324 & 0.663\\
5.0 & 2.780 & 38.058 & 57.056 & 0.667\\
\bottomrule
\end{tabular}
\end{table}

Figure~\ref{fig:perturbapp} shows that the observed pair-covariance error and the theorem bound increase in parallel as the anatomical perturbation grows. The bound therefore responds to the magnitude of uncertainty rather than acting only as a loose worst-case statement.

\begin{figure}[htbp]
\centering
\includegraphics[width=0.72\textwidth]{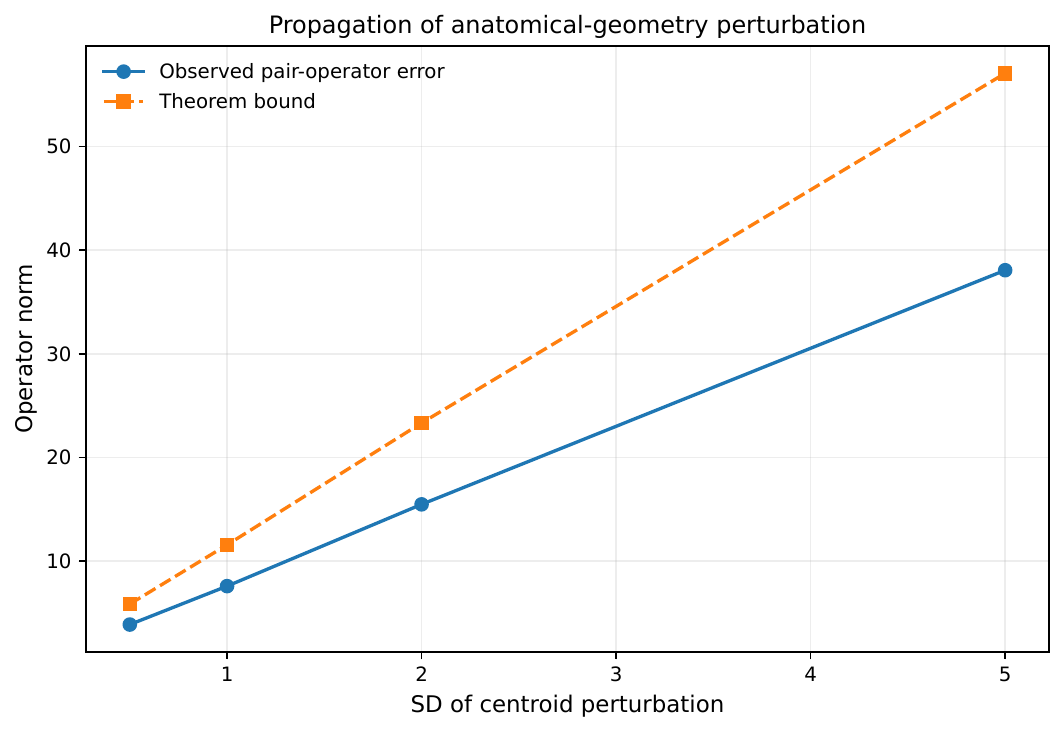}
\caption{Propagation of perturbations in the anatomical node geometry. The observed pair-covariance operator error grows approximately linearly with the perturbation scale and remains below the theorem bound throughout.}
\label{fig:perturbapp}
\end{figure}

\subsection{What the application shows}

The ABIDE results show why relational covariance is useful in a biological network. The 6,670 possible connections do not translate into 6,670 equally important covariance directions. Anatomy and population-level functional organization induce much smaller relational structures. That finding matters for autism research because distributed changes can be difficult to identify one edge at a time, yet collapsing the connectome to a global summary can erase the specific relationships through which systems differ.

The proposed framework preserves that connection-level interpretation while giving the analysis a principled way to share information. The loop-free result respects the biological domain actually observed. The truncation result shows how much compression can be used without relying only on variance explained. The perturbation analysis shows how uncertainty in the region representation can be carried into the covariance rather than ignored. Together, these results demonstrate the practical value of treating the connectome as a structured relational domain rather than as an unorganized vector of edges.

\FloatBarrier

\section{Discussion}
\label{sec:discussion}

The paper is motivated by a simple mismatch that appears in many network problems. Scientific information often describes individual objects, while inference concerns the relationships among them. A long vector of edges does not resolve that mismatch. The covariance still has to say which relationships should inform one another. The framework developed here provides that bridge by carrying object-level geometry to unordered relationships while preserving both endpoints.

Symmetric pairwise kernels provide the algebraic starting point \citep{Pahikkala2015,Stock2018,Viljanen2022}. The innovation here is the covariance and inferential theory that becomes necessary when the statistical process lives on a loop-free relation domain. The exact product spectrum belongs to the full symmetric pair space, yet real undirected networks remove self-pairs. Theorem~\ref{thm:loopfree} shows how much of the spectral structure survives that restriction. In both the simulations and ABIDE geometry, the resulting interlacing and trace bounds are informative rather than merely formal.

The relational spectrum also acquires a direct statistical meaning once the covariance is used for regularization. Large-eigenvalue directions represent patterns that the geometry supports and are shrunk less. Weak directions are shrunk more. The risk calculation shows why this can produce large gains when the geometry agrees with the signal, while also exposing the cost of strong misspecification. That qualification is important: scientific geometry is a modeling assumption that can be evaluated, not an automatic source of improved estimation.

A similar distinction appears in low-rank computation. The ABIDE results show substantial spectral compression, but they also show that variance explained and inferential fidelity answer different questions. The exact truncation result makes it possible to choose rank according to an estimator tolerance rather than a conventional trace threshold. This becomes especially valuable in large networks, where aggressive compression is attractive but the final goal remains inference.

The perturbation result addresses another practical reality. Anatomical coordinates and learned embeddings are not perfectly known. Error in that representation enters before the pair covariance is formed. By following the perturbation from the node kernel to the loop-free covariance and then to the estimator, the theory provides a direct way to assess whether scientific conclusions are sensitive to the geometry used to organize the network. The ABIDE perturbation experiment suggests that the bound remains informative over a meaningful range of anatomical perturbations.

There are several promising directions for further work. A probabilistic treatment of estimated geometry could connect these deterministic bounds to convergence theory for learned embeddings and covariance estimators \citep{BergerHolzmann2024}. Non-Gaussian outcomes and time-varying relational processes would extend the inferential setting. The higher-order construction in Section~\ref{sec:higher} also suggests analogous theory for relations involving more than two objects.

The broader point is that unordered relationships deserve to be treated as a statistical domain in their own right. When scientifically meaningful structure is available for the constituent objects, it can guide dependence among their relationships without collapsing the network or imposing an artificial ordering. The results here make that idea usable for inference by showing what survives on the loop-free domain, how the covariance regularizes estimation, and how approximation and geometric uncertainty affect the analysis.

\section{Conclusions}
\label{sec:conclusions}

This paper develops covariance theory for stochastic processes indexed by loop-free unordered relationships. Object-level scientific structure is transferred to the relation domain through a symmetric construction. The theory then follows the covariance through the steps that matter for inference, from loop removal and regularization to approximation and uncertainty in the underlying geometry.

The results show that a large relational domain can have a much smaller effective covariance dimension, creating opportunities for structured borrowing and computation. They also show why that opportunity must be used carefully. Geometric alignment determines the value of regularization, and covariance trace alone does not determine inferential accuracy. The ABIDE application demonstrates that these distinctions are visible in a real connectome. More broadly, the framework provides a principled route from scientific information about individual objects to stable and interpretable inference about the relationships they form.

\section*{Author Contributions}

The author conceived the study and developed the theory and methodology. She performed the analyses, prepared the visualizations, and wrote and revised the manuscript.

\section*{Funding}

This research received no specific external funding.

\section*{Institutional Review Board Statement}

Not applicable. This study uses only publicly available, de-identified secondary data from ABIDE I.

\section*{Informed Consent Statement}

Not applicable to the present secondary analysis of publicly available, de-identified data.

\section*{Data Availability Statement}

The neuroimaging data analyzed in this study were obtained from the Autism Brain Imaging Data Exchange I (ABIDE I), an existing third-party data resource available through the 1000 Functional Connectomes Project/International Neuroimaging Data-sharing Initiative (INDI) at \url{https://fcon_1000.projects.nitrc.org/indi/abide/abide_I.html}. Access to the imaging data requires registration with NITRC and INDI and is subject to the repository's data-use requirements. The present study did not generate new human-subject neuroimaging data.

\section*{Acknowledgments}

The author thanks the ABIDE investigators, participating sites, and study participants for making this neuroimaging resource available to the research community.

\section*{Conflicts of Interest}

The author declares no conflict of interest.

\bibliographystyle{apalike}
\bibliography{references}

\end{document}